\DeclarePairedDelimiter\bra{\langle}{\rvert}
\DeclarePairedDelimiter\ket{\lvert}{\rangle}
\DeclarePairedDelimiterX\braket[2]{\langle}{\rangle}{#1 \delimsize\vert #2}
\newcommand{\R}{{\mathbb R}}
\newcommand{\C}{{\mathbb C}}
\newcommand{\I}{{\mathcal{O}}}
\newtheorem{teo}{Theorem}[section]
\newtheorem{lema}[teo]{Lemma}
\newtheorem{cor}[teo]{Corollary}
\newtheorem{prop}[teo]{Proposition}
\newtheorem{defi}{Definition}[section]
\begin{document}

\title[Quantum network reliability]{Quantum network reliability with perfect nodes}

\author{J. M. Burgos}
\address{Departamento de Matem\'aticas, CINVESTAV-\,CONACYT, Av. Instituto Polit\'ecnico Nacional 2508, Col. San Pedro Zacatenco, 07360 Ciudad de M\'exico, M\'exico.}
\email{burgos@math.cinvestav.mx}
\thanks{The first author is a research fellow at \textit{Consejo Nacional de Ciencia y Tecnolog\'ia (CONACYT)}, Mexico.}

\begin{abstract}
We introduce the concept of quantum reliability as an extension of the concept of network reliability in the context of quantum networks. We show that this concept is intimately related to the concept of quantum reliability operator that we also introduce and show a Negami like splitting formula for it. Considering that the simple factorization formula for classical networks which is the basis of most of the calculation algorithms does not hold in the quantum context due to entanglement, a Negami like splitting for the quantum reliability operator becomes relevant.
\end{abstract}




\maketitle
\tableofcontents

\section{Introduction}

This paper is about the extension of the concept of network reliability in the context of quantum networks. The field of quantum network technology and quantum computing is rapidly evolving and realisations of these are expected in the near future. It is therefore necessary to have a notion of quantum network reliability that reproduces the classical notion on classical networks as well as having some splitting formula into smaller quantum networks to make the concept operational as in the classical case. For the state of the art regarding the quantum network realisation we recommend the survey \cite{Towards} written by W. Kozlowski and S. Wehner and references therein. For general concepts on quantum mechanics we recommend the classical book \cite{vonNeumann} by J. von Neumann. For the theory of quantum information and quantum computation we recommend the classical book \cite{Nielsen} by M. A. Nielsen and I. L. Chuang.

We start by recalling the standard concepts and notation of classical network reliability in section \ref{sectionI}. In section \ref{sectionII} we introduce the concept of \textit{quantum reliability function} on quantum networks which, in analogy with the classical reliability on classical networks, the main motivation for this function is the quantum network design. In this section we seek for a function that reproduces the classical network reliability on classical networks and behaves as expected on quantum networks under entangled states. After establishing the axioms for a quantum reliability function, Theorem \ref{main1} shows that it can be realized with a self adjoint operator that we call the \textit{quantum realiability operator}. It is important to remark this function is a priori independent of any interpretation of quantum mechanics and measurement hypotheses. However, under the assumption of \textit{non contextuality}, the value of the quantum reliability realized with the mentioned operator is the probability that the quantum network stays connected after measurement. We refer to the paper \cite{noncontext} by S. Kochen and E. P. Specker for the notion of non contextuality. 

Quantum states of a quantum network have the phenomenon of \textit{entanglement} and this invalidates the application of well known simple factorization formula for the exact calculation of the classical network reliability in the quantum context. Considering that this formula is the basis for most of the known exact calculation algorithms, it is necessary to have some splitting formula for quantum reliability in order to make it operational and functional. In section \ref{sectionIII}, Theorem \ref{main2} shows that a Negami like splitting formula holds for the quantum reliability operator.

In section \ref{sectionIV} we treat the case of \textit{hybrid classical-quantum networks}. These are important for technological applications as well as theoretical ones. One of the theoretical applications is the fact that a classical network with imperfect nodes can be effectively replaced by a hybrid network with perfect nodes.

An important particular case of a hybrid network is that of a \textit{classical network with a quantum sublayer}. These networks are important for technological applications as discussed in \cite{Towards}. We show in section \ref{sectionVI} that the quantum reliability of a classical network with a quantum sublayer is the sum of the classical reliability of the classical network with quantum corrections coming from the quantum sublayer.

The remaining sections contain the respective proofs of the mentioned results. Except for an algebraic combinatoric result concerning the invertibility of a matrix, the paper is self contained.

\section{Classical network reliability}\label{sectionI}

In reliability theory, we identify a (classical) network with a \textit{stochastic graph}. We will be interested in the case of perfect nodes. A stochastic graph is a pair $\left(G,\,(p_e)_{e\in E}\right)$ where $G$ is a graph and every edge $e\in E$ of it has associated a Bernoulli variable with parameter $p_e$ such that the variables are independently distributed. Identifying on every edge the \textit{one state} as an operating edge and the \textit{zero state} as a failure, we define the \textit{reliability} of $\left(G,\,(p_e)_{e\in E}\right)$ as the probability of the graph being connected. Concretely, denoting by
\begin{equation}\label{states_classical}
\Lambda_G\,=\,\{{\bf 0,1}\}^E
\end{equation}
the set of states of the stochastic graph and recalling that
\begin{equation}\label{rel1}
\mbox{Prob}(\epsilon)\,=\,\prod_{e\in E}\,p_e^{\epsilon(e)}\,(1-p_e)^{1-\epsilon(e)},\qquad \epsilon\in \Lambda_G
\end{equation}
due to the independent distribution of the variables, we have
\begin{equation}\label{rel2}
R\left(G,\,(p_e)_{e\in E}\right)\,=\,\mbox{Prob}\,\left([\epsilon\ \mbox{is}\ \mbox{connected}]\right)\,=\,\sum_{\epsilon\in \mathcal{C}}\, \mbox{Prob}(\epsilon)
\end{equation}
because of the independent character of the variables again. Here we have denoted by $\mathcal{C}$ the subset of connected states. Given an ordering of the edges, we denote the reliability of a stochastic graph by
$$R_G\,(p_1,\ldots,p_n)\,=\,R\left(G,\,(p_e)_{e\in E}\right).$$
Geometrically, the sequence of Bernoulli parameters $(p_e)_{e\in E}\,\in\,[0,1]^E$ lies in the cube whose vertices are the states in $\Lambda_G$, see expression \eqref{states_classical}. Equivalently, the sequence of Bernoulli parameters lies in the convex hull of set of states $\Lambda_G$. We have the reliability function
$$R_G:\,\mbox{Conv}\,(\Lambda_G)\rightarrow [0,1].$$

From equations \eqref{rel1} and \eqref{rel2}, it is clear that
\begin{equation}\label{SimpleFact}
R_G\,=\,p_e\,R_{G\cdot e}\,+\,(1-p_e)\,R_{G-e}
\end{equation}
where $G\cdot e$ denotes the graph resulting from removing the edge $e$ from $G$ and identifying the adjacent nodes and $G- e$ denotes the graph resulting form removing the edge $e$ form $G$, see Figure \ref{connect_QN}. We have abused of notation by omitting to write the dependence on the Bernoulli parameters. The equation \eqref{SimpleFact} is usually called the \textit{simple factorization formula} and it is the basis of most of the exact network reliability calculation algorithms. We recommend the book \cite{Colbourn} written by C. J. Colbourn and references therein for the theory of reliability of stochastic graphs.

\begin{figure}
\begin{center}
  \includegraphics[width=0.85\textwidth]{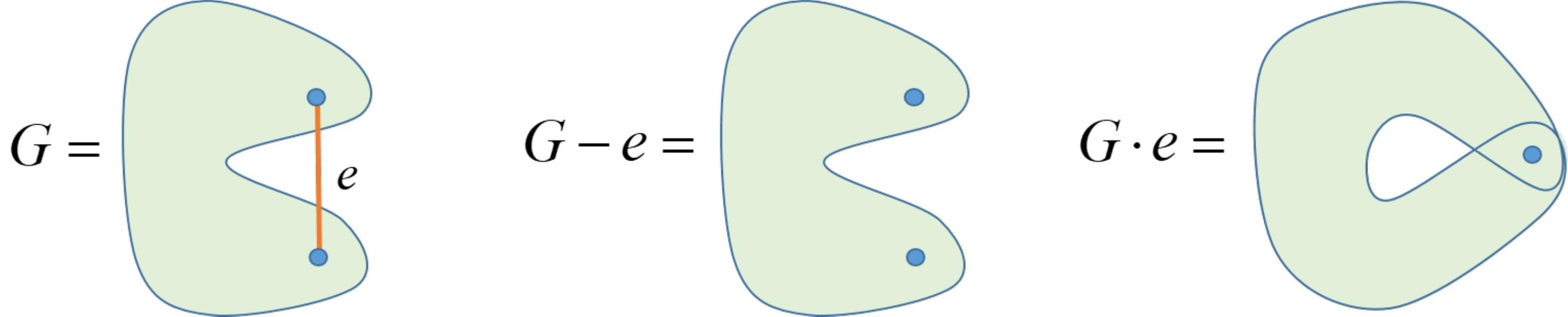}\\
  \end{center}
  \caption{Deletion and contraction of an edge.\ \ \ \ \ \ \ \ \ \ \ \ \ }\label{connect_QN}
\end{figure}

\section{Quantum network reliability}\label{sectionII}

Network reliability theory is an example of a classical system with a finite set of states \eqref{states_classical}. Although in general the quantization of a classical system is much more of an art than a well established algorithm, in the case of a finite set of classical states the situation is straightforward and it goes as follows \footnote{We will make the usual assumption that the Poisson algebra of observables is trivial. This is reasonable since the space of states is finite.}. Consider a classical system with a finite set $\Lambda$ of states. To model the phenomenon of \textit{superposition}, consider the complex vector space generated by $\Lambda$ with the usual hermitian product. Now consider the unit sphere $S$ and define the following relation: two vectors in the sphere are equivalent if the differ by a multiplicative global phase, that is $a$ and $b$ in $S$ are equivalent if there is a complex $\lambda$ such that $a=\lambda\,b$. Note that such a complex $\lambda$ must have unit norm, that is it must lie in the unitary group $U(1)$. A quantum state is an equivalence class in the sphere and the corresponding set of quantum sets will be denoted by $q\Lambda$.


Before describing a general quantum network, let us start with a simple case and that is the simple graph $G_1$ with only two nodes. Now, instead of having a two-state Bernoulli variable associated to the edge, we will have a \textit{qubit}. A qubit is the space of equivalence classes in the unit sphere $S^3\subset \C^2$ where two vectors are equivalent if they differ by a global multiplicative phase in $U(1)$. Concretely, $S^3\subset \C^2$ has a natural action by $U(1)$ and the space of orbits is the qubit,
$$\pi:\,S^3\rightarrow q\Lambda_{G_1},\qquad g\cdot (z_1,\,z_2)\,=\,(g\,z_1,\,g\,z_2),\qquad g\in U(1),\quad (z_1,\,z_2)\in \C^2.$$
Now, instead of having just two classical states per bit, the set of quantum states per qubit is the Riemann sphere
$$q\Lambda_{G_1}\,=\,\C P^1\,=\,S^2.$$
A representative in $S^3$ of a quantum state in the qubit has the following general form
\begin{equation}\label{qubit}
\ket{\psi}\,=\,p^{1/2}\,\ket{{\bf 1}}\,+\,q^{1/2}\,z\,\ket{{\bf 0}},\qquad q=1-p,\quad z\in U(1)
\end{equation}
where the expression is given up to some multiplicative global phase. We have used the usual \textit{bra-ket} Dirac notation in quantum mechanics.\footnote{Vectors in a Hilbert space $(H,\,\braket{\ }{\ })$ are denoted as \textit{kets} $\ket{b}$ and functionals in the respective dual space are denoted as \textit{bras} $\bra{a}$. The Riesz dual of $\ket{b}$ is $\bra{b}$. The natural pairing of a bra and a ket is the bracket,
$$\bra{a}\ \ket{b}\,=\,\braket{a}{b}.$$
The joke is on Dirac.
}

Topologically, the relation between quantum states in the qubit and their representatives is described by the non trivial Hopf fibration $\pi:\,S^3\rightarrow\,S^2$. In particular, the non triviality of the fibration implies the non existence of a global continuous choice of representatives on the qubit, that is to say, there is no continuous choice of the multiplicative global phase in $U(1)$ such that the quantum states can be expressed as \eqref{qubit}.

The only connected state of the considered graph we can measure is the classical state ${\bf 1}$ hence the probability of observing this classical state after measurement with the system under the quantum state represented by \eqref{qubit} is \footnote{This is the \textit{Born rule} first formulated by Max Born in \cite{Born} and proved by Andrew M. Gleason in \cite{Gleason} under the assumption of \textit{non contextuality}.}
$$QR_{G_1}\left(\pi(\ket{\psi})\right)\,=\,\vert\braket{{\bf 1}}{\psi}\vert^2\,=\,p.$$
Here $QR$ stands for \textit{quantum reliability}, the function we seek to define. In particular, the quantum state represented by \eqref{qubit} is the quantum analog of the Bernoulli variable with parameter $p$.


Now consider the case of the graph $G_2$ with only two nodes and two edges $a$ and $b$ connecting them. In contrast with the previous case, now we have the phenomenon of \textit{entanglement} that we describe as follows. The space of states of this quantum network is the quotient by the natural action of $U(1)$ on the unit sphere of the hermitian complex vector space generated by the classical states with the usual hermitian product, that is to say
$$\pi:S^7\rightarrow q\Lambda_{G_2}\,=\,S^7/U(1),\qquad S^7\subset \langle\,\ket{{\bf 00}},\, \ket{{\bf 01}},\, \ket{{\bf 10}},\, \ket{{\bf 11}}\,\rangle_\C$$
where the action of the unitary group $U(1)$ on the sphere $S^7$ is by multiplication. Geometrically, the space of quantum states is the complex projective space
$$q\Lambda_{G_2}\,=\, \C P^3.$$
A quantum state represented by $\ket{\psi}$ is \textit{non entangled} if it can be factored out by quantum states of two qubits, one for each edge, that is to say
$$\ket{\psi}\,=\,\ket{\psi_a}\otimes\ket{\psi_b}$$
where $\ket{\psi_a}$ and $\ket{\psi_b}$ have the form \eqref{qubit}. Note that since the tensor product we are working with is balanced by $\C$, this notion is well defined. In particular, the space of non-entangled states is the projective submanifold given by the image of the embedding
$$\C P^1\times \C P^1\hookrightarrow \C P^3,\qquad \left(\pi(\ket{\psi_a}),\,\pi(\ket{\psi_b})\right)\mapsto \pi\left(\ket{\psi_a}\otimes\ket{\psi_b}\right).$$
Since the tensor product we are working with is balanced by $\C$, the embedding is well defined. It is clear that almost every quantum state is entangled since the subspace of non-entangled states has real codimension two. Non entangled states are the quantum analog of independence of classical variables.

The entanglement phenomenon has very counterintuitive effects. As an example, consider the state represented by
$$\ket{\psi_1}\,=\,p^{1/2}\ket{{\bf 11}}\,+\,q^{1/2}\ket{{\bf 00}},\qquad q=1-p.$$
This entangled state has the peculiar phenomenon that measuring the state of one edge automatically determines the state of the other edge in such a way that the second edge copies the state of the measured edge. In particular, the quantum network under this quantum state effectively works as a quantum network with one single edge. Hence, by the previous example with the graph $G_1$, we should have
$$QR_{G_2}\left(\pi(\ket{\psi})\right)\,=\,p.$$

Another very interesting example is provided by the state represented with
$$\ket{\psi_2}\,=\,p^{1/2}\ket{{\bf 01}}\,+\,q^{1/2}\ket{{\bf 10}},\qquad q=1-p.$$
Again, measuring the state of one edge automatically determines the state of the other. However, in contrast with the previous example, the state of the second state is the opposite of the state of the measured edge hence the quantum network under this quantum state behaves as a perfect edge. In particular,
$$QR_{G_2}\left(\pi(\ket{\psi_2})\right)\,=\,1.$$
It is very interesting the fact that the entanglement of two imperfect edges effectively works as a perfect edge.

It is worth to mention that the entanglement phenomenon has other counterintuitive and interesting effects with no classical counterpart like the \textit{no cloning theorem} \cite{Nielsen} and the \textit{monogamy of entanglement} \cite{CKW}, \cite{monogamy}, \cite{OV}.

Now we treat the general case. Consider a graph $G$ and its set of classical states $\Lambda_G$ defined in \eqref{states_classical}. The set of quantum states is the quotient of the unit sphere $S$ in the hermitian space $l^2_\C(\,\Lambda_G\,)$ by the multiplicative action of $U(1)$,
$$q\Lambda_G\,=\,S/U(1),\qquad S\subset l^2_\C(\,\Lambda_G\,).$$
Geometrically, the set of quantum states is the complex projective space
$$q\Lambda_G\,=\,\C P^{2^{\vert E \vert} -1}$$
where $\vert E\vert$ is the number of edges of $G$. Again, it is worth to mention that the relation between the quantum states and the choice of representatives in the sphere $S$ is non trivial in the following sense: topologically, the quotient
$$\pi: S\rightarrow q\Lambda_G$$
is a non trivial fibration with fiber $U(1)$. In particular, there is no global continuous choice of representatives in $S$ for the quantum states in $q\Lambda_G$.

The counterintuitive entanglement effects discussed for the graph $G_2$ extend almost verbatim to a a general graph $G$. Indeed, consider a quantum network modelled on the graph $G$ under the state represented by
$$\ket{\psi}\,=\,p^{1/2}\ket{\zeta}\,+\,q^{1/2}\,z\,\ket{\chi},\qquad q=1-p,\qquad z\in U(1)$$
where $\zeta$ is a connected classical state and $\chi$ is a non connected classical state in $\Lambda_G$. There is an edge $e$ in the graph such that it is active in $\zeta$ and inactive in $\chi$. Then, measuring the state of the edge $e$ automatically determines the states of the other edges hence the quantum network effectively works as the single edge $e$ under the quantum state \eqref{qubit}. In particular, by the previous example with the graph $G_1$,
$$QR_G\left(\,\pi(\ket{\psi})\,\right)\,=\,p.$$

Now we are in position to define the notion of quantum reliability. We seek for a function defined on the set of quantum states taking values in the interval $[0,1]$ such that it extends the classical reliability on non entangled qubits and it reproduces the previous example on entangled states.

\begin{defi}\label{Definition_QR}
Consider a graph $G$ with $n$ edges. A quantum reliability is a smooth function
$$QR_G:\,q\Lambda_G\rightarrow\,[0,1]$$
satisfying the following axioms:
\begin{enumerate}
\item {\bf Extension axiom.} For every $n$-tuple $\ket{\psi_1},\ldots,\,\ket{\psi_n}$ such that
$$\ket{\psi_j}\,=\,p_j^{1/2}\,\ket{{\bf 1}}\,+\,q_j^{1/2}\,z_j\,\ket{{\bf 0}},\qquad q_j=1-p_j,\quad z_j\in U(1),$$
up to some multiplicative global phase, we have
$$QR_G\,\left(\,\pi(\ket{\psi_1}\otimes\ldots\otimes\ket{\psi_n})\,\right)\,=\,R_G\,(p_1,\ldots,\,p_n).$$

\item {\bf Entanglement axiom.} For every connected classical state $\zeta$, every non connected classical state $\chi$ and every $z$ in $U(1)$ we have
$$QR_G\,\left(\,\pi(\ket{\psi})\,\right)\,=\,p,\qquad \ket{\psi}\,=\,p^{1/2}\,\ket{\zeta}\,+\,q^{1/2}\,z\,\ket{\chi},\qquad q=1-p,$$
where $\ket{\psi}$ was written up to some multiplicative global phase.
\end{enumerate}
\end{defi}

The realization of a quantum reliability is the object of the following Theorem which is the first result of the paper and will be proved in section \ref{Section_proof_1}.

\begin{teo}\label{main1}
Consider a graph $G$. There is a unique self adjoint operator
$$\widehat{QR}_G:\,l^2_\C(\,\Lambda_G\,)\rightarrow\,l^2_\C(\,\Lambda_G\,)$$
such that the function $QR_G:\,q\Lambda_G\rightarrow\,[0,1]$ defined by
$$QR_G\left(\,\pi(\ket{\psi})\,\right)\,=\,\bra{\psi}\,\widehat{QR}\,\ket{\psi},\qquad \ket{\psi}\in S,\qquad \pi:S\rightarrow q\Lambda_G$$
is a quantum reliability. Moreover, under the assumption of non contextuality, this quantum reliability value is the probability that the quantum network under the state represented by $\ket{\psi}$ stays connected after measurement.
\end{teo}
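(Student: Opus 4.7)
The plan is to construct the candidate operator explicitly, verify the two axioms, and then show uniqueness by pinning down every matrix entry from the extension axiom alone. I would define
$$\widehat{QR}_G \,=\, \sum_{\epsilon \in \mathcal{C}} \ket{\epsilon}\bra{\epsilon}$$
in the canonical orthonormal basis $\{\ket{\epsilon}\}_{\epsilon \in \Lambda_G}$ of $l^2_\C(\,\Lambda_G\,)$; this is the orthogonal projection onto the span of the connected classical states and is manifestly self adjoint, with spectrum contained in $\{0,1\}$.

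For the \textbf{existence} half I would verify the two axioms directly. For a tensor product $\ket{\psi} = \bigotimes_j \ket{\psi_j}$ with $\ket{\psi_j} = p_j^{1/2}\ket{{\bf 1}} + q_j^{1/2}z_j\ket{{\bf 0}}$, expanding in the canonical basis yields coefficients $c_\epsilon$ with $\vert c_\epsilon\vert^2 = \prod_j p_j^{\epsilon(j)}q_j^{1-\epsilon(j)} = \mbox{Prob}(\epsilon)$ by \eqref{rel1}, so $\bra{\psi}\widehat{QR}_G\ket{\psi} = \sum_{\epsilon \in \mathcal{C}} \mbox{Prob}(\epsilon) = R_G(p_1,\ldots,p_n)$ by \eqref{rel2}, which is the extension axiom. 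For the entanglement axiom, $\widehat{QR}_G\ket{\zeta} = \ket{\zeta}$ and $\widehat{QR}_G\ket{\chi} = 0$ because $\zeta$ is connected and $\chi$ is not, so $\bra{\psi}\widehat{QR}_G\ket{\psi} = p$ with no dependence on the global phase $z$.

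The \textbf{uniqueness} is where the real work lies, and it is the step I expect to be the main obstacle. Given any self adjoint operator $A$ whose quadratic form realises a quantum reliability, I would evaluate $\bra{\psi}A\ket{\psi}$ on the tensor product state above; the outcome is a finite sum
$$\bra{\psi}A\ket{\psi} \,=\, \sum_{\epsilon,\,\epsilon'} A_{\epsilon,\epsilon'}\; M_{\epsilon,\epsilon'}(p,q)\;\prod_{j} z_j^{\epsilon(j)-\epsilon'(j)},$$
where $M_{\epsilon,\epsilon'}(p,q) = \prod_j p_j^{(\epsilon(j)+\epsilon'(j))/2}q_j^{(2-\epsilon(j)-\epsilon'(j))/2}$ is a monomial in the $p_j,q_j$. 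The extension axiom forces this to equal $R_G(p_1,\ldots,p_n)$, which is independent of $(z_1,\ldots,z_n) \in U(1)^n$. By orthogonality of characters on $U(1)^n$, the contributions at each fixed value of $\vec{k} = \epsilon-\epsilon'$ must separately agree with the Fourier coefficient of $R_G$ at $\vec{k}$ (zero for $\vec{k}\neq \vec 0$). The delicate bookkeeping is to check that, at fixed $\vec{k}$, the pairs $(\epsilon,\epsilon')$ sharing that character produce genuinely distinct monomials in $p,q$: indices with $k_j=\pm 1$ contribute the fixed factor $p_j^{1/2}q_j^{1/2}$, while indices with $k_j=0$ contribute either $p_j$ or $q_j$ depending on the common value of $\epsilon(j)=\epsilon'(j)$, and these choices give linearly independent monomials on $[0,1]^n$. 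This pins down $A_{\epsilon,\epsilon'}=0$ for $\epsilon\neq \epsilon'$ and $A_{\epsilon,\epsilon}=1$ precisely when $\epsilon\in\mathcal{C}$, so $A=\widehat{QR}_G$.

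Finally, since $\widehat{QR}_G$ is the orthogonal projector onto the span of the connected classical states, the Born rule — equivalently, Gleason's theorem under the non contextuality hypothesis of \cite{noncontext} — identifies $\bra{\psi}\widehat{QR}_G\ket{\psi}$ with the probability that a projective measurement in the classical basis produces a connected outcome, which is the final assertion of the statement.
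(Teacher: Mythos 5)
Your construction and existence verification coincide with the paper's: the same projector $\widehat{QR}_G=\sum_{\epsilon\in\mathcal{C}}\ket{\epsilon}\bra{\epsilon}$, the same expansion of the product state to get the extension axiom, the same orthogonality computation for the entanglement axiom, and the same appeal to Gleason's theorem for the final assertion. Where you genuinely diverge is uniqueness. The paper first pins the diagonal entries $\lambda_\epsilon=\bra{\epsilon}L\ket{\epsilon}$ via the extension axiom at the corners $p_j\in\{0,1\}$, and then kills each off-diagonal entry $c=\bra{\epsilon}L\ket{\epsilon'}$ by testing on the two-term superpositions $\ket{\psi_\pm}=2^{-1/2}\left(\ket{\epsilon}\pm(\bar c/\Vert c\Vert)\ket{\epsilon'}\right)$, for which $\bra{\psi_\pm}L\ket{\psi_\pm}=\frac{1}{2}(\lambda_\epsilon+\lambda_{\epsilon'})\pm\Vert c\Vert$ either leaves the interval $[0,1]$ (when $\epsilon$ and $\epsilon'$ are both connected or both disconnected) or contradicts the entanglement axiom at $p=1/2$ (mixed case). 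You instead test only on product states, exploit the full phase freedom $(z_1,\ldots,z_n)\in U(1)^n$ that the extension axiom permits, separate the frequencies $\vec k=\epsilon-\epsilon'$ by orthogonality of characters on the torus, and then use linear independence of the Bernstein-type monomials $\prod_j p_j^{a_j}q_j^{1-a_j}$ (after factoring out the common term $\prod_{k_j=\pm 1}(p_jq_j)^{1/2}$, which is nonvanishing on the open cube) to force every coefficient to vanish; this bookkeeping is correct as written. Your route is computationally heavier but proves something slightly stronger: the extension axiom alone determines the operator, so for operator-induced functions the entanglement axiom becomes a consequence rather than an independent constraint, and in fact your argument never uses self-adjointness since all matrix entries are pinned down directly. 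The paper's route buys a more elementary pointwise argument that makes visible exactly where the $[0,1]$ range constraint and the entanglement axiom each enter.
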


We refer to the paper \cite{noncontext} by S. Kochen and E. P. Specker for the notion of non contextuality. It is clear that the function $QR_G$ in Theorem \ref{main1} is well defined since the right hand side defines a $U(1)$ invariant function which factors though $\pi$,
$$\widetilde{QR}_G(\ket{\psi})\,=\,\widetilde{QR}_G(z\,\ket{\psi}),\qquad \widetilde{QR}_G(\ket{\psi})\,=\,\bra{\psi}\,\widehat{QR}\,\ket{\psi},\qquad z\in U(1),\quad \ket{\psi}\in S.$$

The operator in Theorem \ref{main1} will be called the \textit{quantum reliability operator}. We do not know whether there are other quantum reliabilities different from the one in Theorem \ref{main1}. If there were, they would not be induced by a linear operator.

\section{Splitting of quantum reliability}\label{sectionIII}

It is clear from the previous entanglement examples that the simple factorization formula \eqref{SimpleFact} does not work in general for quantum networks. However, there is a splitting formula proved for classical networks that survives in the quantum context. This is the Negami's splitting.

In \cite{Negami} S. Negami introduced the \textit{Negami's polynomial} of a graph and showed that it has a very nice splitting in terms of the partition lattice of a finite set of nodes (Theorem 4.2, \cite{Negami}). Actually, a closer look at his proof shows that he proved a slightly more general version of the splitting specially suited for specializations of the Negami's parameter whereat the splitting might be singular (Theorem 2.1, \cite{Burgos3}).

Variable changes and specializations on the Negami's polynomial produces different other well known graph invariants as the Tutte polynomial, the Reliability polynomial, the Ising model, etc. However, as it was mentioned in the previous paragraph, certain specializations of the Negami's parameter might lead to singular splitting formulas. This is exactly the case for the reliability polynomial. Therefore, a Negami like splitting formula for the reliability polynomial is not immediate.

A Negami like splitting formula for the reliability polynomial were found by F. Robledo and the author in \cite{Burgos1}. One of the main difficulties in this approach was the invertibility of the \textit{connectivity matrix} which is proved by calculating its determinant. This calculation was performed by D. M. Jackson in \cite{Jackson} and independently by the author in \cite{Burgos2}, see also (Section 4, \cite{Burgos4}). For a direct derivation of the splitting as a Negami's specialization, see \cite{Burgos3}.

Now we decribe the splitting. Following Negami's notation, assume that the graph $G$ splits in subgraphs $K$ and $H$ only sharing $m$ common vertices $U=V(K)\cap V(H)$. Let $\Gamma(U)$ denote the set of partitions over $U$ and let $\mathcal{A}=\{U_{1},U_{2},\ldots U_{k}\}$ be one of these partitions. Denote by $K/\mathcal{A}$ and $H/\mathcal{A}$ the graphs obtained by identifying all vertices in each $U_{i}$ of $K$ and $H$ respectively. See Figure \ref{Identification}.

\begin{figure}
\begin{center}
  \includegraphics[width=0.7\textwidth]{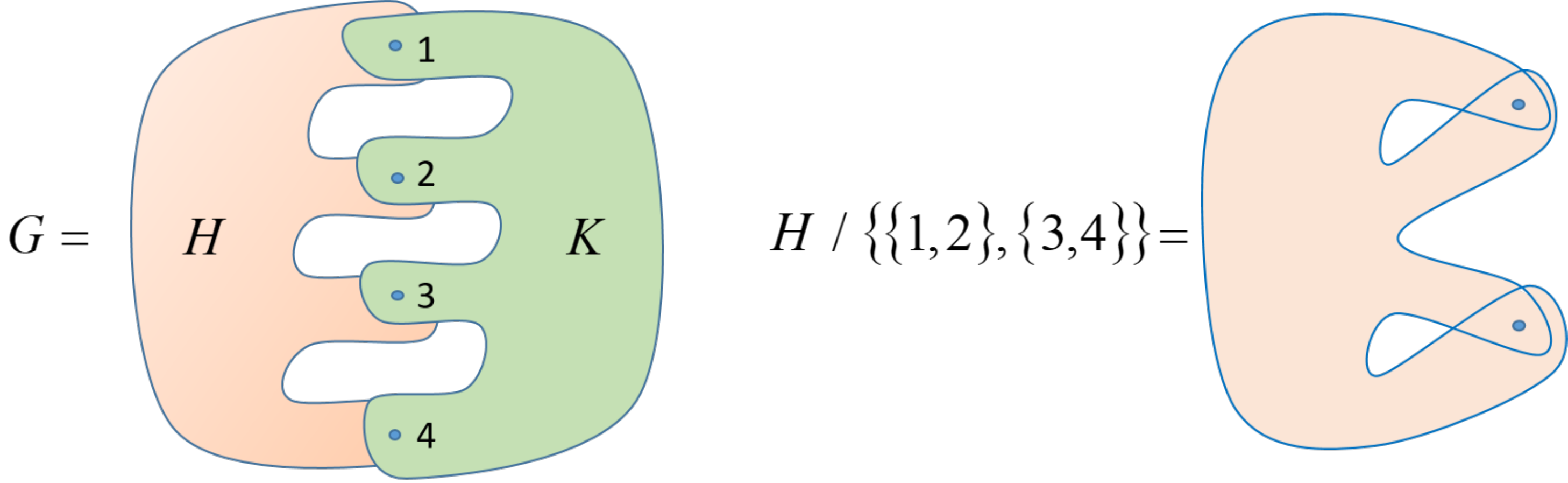}\\
  \end{center}
  \caption{Sharing nodes identification.\ \ \ \ \ \ \ \ \ \ \ \ \ }\label{Identification}
\end{figure}

On  $\Gamma(U)$ define the following partial order: $\gamma\leq \gamma'$ if $\gamma'=\{U'_{1}, \ldots U'_{l}\}$ is a refinement of $\gamma=\{U_{1}, \ldots U_{m}\}$; i.e. For every $U'_{i}$ there is $U_{j}$ such that $U'_{i}\subseteq U_{j}$. The pair $(\Gamma(U), \leq)$ is a lattice. We denote by $\gamma\wedge\gamma'$ the infimum of $\gamma$ and $\gamma'$. Similarly, we denote by $\gamma\vee \gamma'$ the supremum.

Define $\alpha_{\gamma,\gamma'}$ equal to one if $\gamma\wedge \gamma'$ has a single block and equal to zero otherwise. Giving a total ordering on the partitions, these are the entries of the connectivity matrix $M_m$. As it was mentioned before, this matrix is invertible. Define
$$\beta\,=\,(\beta_{\gamma,\gamma'})\,=\,M_m^{-1}.$$

Let $G$ be a graph obtained as a union of two graphs $K$ and $H$ sharing only the vertices $U=\{u_{1},\ldots u_{m}\}$. Then the set of edges decompose as
\begin{equation}\label{Desc_ejes}
E_G\,=\,E_K \sqcup E_H
\end{equation}
hence the set of classical states factors as
\begin{equation}\label{Fact_classical_states}
\Lambda_G\,=\,\Lambda_K \times \Lambda_H
\end{equation}
and we have the factorization
\begin{equation}\label{Fact_L2}
l^2_\C(\Lambda_G)\,=\,l^2_\C(\Lambda_K \times \Lambda_H)\,=\,l^2_\C(\Lambda_K)\,\otimes_\C\, l^2_\C(\Lambda_H).
\end{equation}
Because $E_K\,=\,E_{K/\gamma}$ and $E_H\,=\,E_{H/\gamma'}$ for every pair of partitions $\gamma$ and $\gamma'$ in $\Gamma(U)$, the factorization \eqref{Fact_L2} coincides with the corresponding one induced by $K/\gamma$ and $H/\gamma'$ as well.

The following is the second result of the paper and will be proved in section \ref{Section_proof_2}.

\begin{teo}\label{main2}
Let $G$ be a graph obtained as a union of two graphs $K$ and $H$ sharing only the vertices $U=\{u_{1},\ldots u_{m}\}$. Then,
\begin{equation}\label{equation_main2}
\widehat{QR}_G\,=\,\sum_{\gamma,\,\gamma'\in \Gamma(U)}\ \beta_{\gamma,\gamma'}\ \widehat{QR}_{K/\gamma}\otimes \widehat{QR}_{H/\gamma'},
\end{equation}
where $\widehat{QR}$ is the quantum reliability operator in Theorem \ref{main1}.
\end{teo}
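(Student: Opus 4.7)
My plan begins by extracting from Theorem \ref{main1} the explicit description
$$\widehat{QR}_G\,=\,\sum_{\epsilon\in \mathcal{C}_G}\ket{\epsilon}\bra{\epsilon}$$
of the quantum reliability operator as the orthogonal projector onto the span of the connected classical states. This should drop out of the proof of Theorem \ref{main1}: applying the Extension axiom to the product state $\ket{\psi}=\bigotimes_j(p_j^{1/2}\ket{{\bf 1}}+q_j^{1/2}z_j\ket{{\bf 0}})$ and expanding in the classical basis yields a polynomial identity in $(p_j)$ whose coefficients involve the $U(1)^n$-characters $\prod_j z_j^{\epsilon(e_j)-\epsilon'(e_j)}$; integrating against the Haar measure in the phases $z_j$ forces every off-diagonal matrix element of $\widehat{QR}_G$ to vanish, and identifying polynomial coefficients in the $p_j$ then pins the diagonal entries to the indicators $[\epsilon\in \mathcal{C}_G]$.

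Once this diagonal form is available, both sides of \eqref{equation_main2} are diagonal in the product basis induced by \eqref{Fact_classical_states}, so the operator identity is equivalent to the pointwise scalar identity
$$\bigl[(\epsilon_K,\epsilon_H)\in \mathcal{C}_G\bigr]\,=\,\sum_{\gamma,\gamma'\in\Gamma(U)}\beta_{\gamma,\gamma'}\,\bigl[\epsilon_K\in \mathcal{C}_{K/\gamma}\bigr]\,\bigl[\epsilon_H\in \mathcal{C}_{H/\gamma'}\bigr]$$
for every $(\epsilon_K,\epsilon_H)\in \Lambda_K\times \Lambda_H$. Associate to each $\epsilon_K$ the partition $\pi(\epsilon_K)\in \Gamma(U)$ whose blocks are the traces on $U$ of the connected components of the subgraph of $K$ activated by $\epsilon_K$, and analogously for $\epsilon_H$. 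Direct inspection shows that $\epsilon_K$ is a connected state of $K/\gamma$ iff (i) every vertex of $V(K)\setminus U$ is joined by active edges of $\epsilon_K$ to some vertex of $U$, and (ii) $\alpha_{\gamma,\pi(\epsilon_K)}=1$; the analogous statement holds for $H$; and since $K$ and $H$ meet only along $U$, the pair $(\epsilon_K,\epsilon_H)$ belongs to $\mathcal{C}_G$ iff both anchoring conditions (i) hold and in addition $\alpha_{\pi(\epsilon_K),\pi(\epsilon_H)}=1$.

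Plugging these factorisations in, the common anchoring prefactor appears identically on both sides, and the identity reduces to the purely combinatorial statement
$$\alpha_{\pi(\epsilon_K),\pi(\epsilon_H)}\,=\,\sum_{\gamma,\gamma'\in\Gamma(U)}\alpha_{\pi(\epsilon_K),\gamma}\,\beta_{\gamma,\gamma'}\,\alpha_{\pi(\epsilon_H),\gamma'},$$
which is the matrix equation $M_m\beta M_m=M_m$ evaluated at $(\pi(\epsilon_K),\pi(\epsilon_H))$ and follows at once from the symmetry of $M_m$ together with the defining relation $\beta=M_m^{-1}$.

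The only ingredient imported from outside the paper is the invertibility of the connectivity matrix, as cited. The main conceptual hurdle is not computational but combinatorial: correctly isolating the vertex-anchoring condition (i) from the partition-join condition (ii) in the characterisation of connectivity in $K/\gamma$, which is exactly what allows the two tensor factors $l^2_\C(\Lambda_K)$ and $l^2_\C(\Lambda_H)$ to decouple and the classical Negami splitting to lift verbatim to the operator level.
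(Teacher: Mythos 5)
Your proposal is correct and follows essentially the same route as the paper: you use the diagonal projector form of $\widehat{QR}$ coming from Theorem \ref{main1}, the island-free/partition-join characterization of connectivity in $K/\gamma$, $H/\gamma'$ and $G$ (your conditions (i)--(ii) are exactly the content of the paper's Lemmas \ref{Lema1} and \ref{Lema2}), and the invertibility of the connectivity matrix. The only cosmetic difference is the finish: you verify the diagonal entries via the entrywise identity $M_m\,\beta\, M_m = M_m$ using the symmetry of $M_m$, whereas the paper packages the same computation as the two operator identities \eqref{Identities} involving the projections $\hat{\I}_\gamma$, inverting the second and substituting into the first.
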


Curiously enough, the proof of Theorem \ref{main2} follows closely the one given in \cite{Burgos1} for classical networks. However, the proofs were rewritten in a shorter, clearer and more elegant way. Except for the proof regarding the connectivity matrix invertibility, the exposition is self contained.

As an example, consider the case of three sharing nodes. With respect to the following ordering of the partition lattice
$$\{\{1\},\{2\},\{3\}\}< \{\{1\},\{2,3\}\}< \{\{1,3\},\{2\}\}< \{\{1,2\},\{3\}\}< \{\{1,2,3\}\}$$
where the minimum and maximum elements are the full and trivial partitions respectively, the connectivity matrix and its inverse read as follows
\begin{equation}\label{Connectivity_three_nodes}
(\alpha_{\gamma,\gamma'})\,=\,\left(
                                                                \begin{array}{ccccc}
                                                                  0 & 0 & 0 & 0 & 1 \\
                                                                  0 & 0 & 1 & 1 & 1 \\
                                                                  0 & 1 & 0 & 1 & 1 \\
                                                                  0 & 1 & 1 & 0 & 1 \\
                                                                  1 & 1 & 1 & 1 & 1 \\
                                                                \end{array}
                                                              \right),\quad
(\beta_{\gamma,\gamma'})\,=\,\frac{1}{2}\left(\begin{array}{ccccc}
1 & -1 & -1 & -1 & 2 \\
-1 & -1 & 1 & 1 & 0 \\
-1 & 1 & -1 & 1 & 0 \\
-1 & 1 & 1 & -1 & 0 \\
2 & 0 & 0 & 0 & 0 \\
\end{array}\right).
\end{equation}
See Figure \ref{Fact3} for an schematic picture of the splitting \eqref{equation_main2} for three sharing nodes.

\begin{figure}
\begin{center}
  \includegraphics[width=0.85\textwidth]{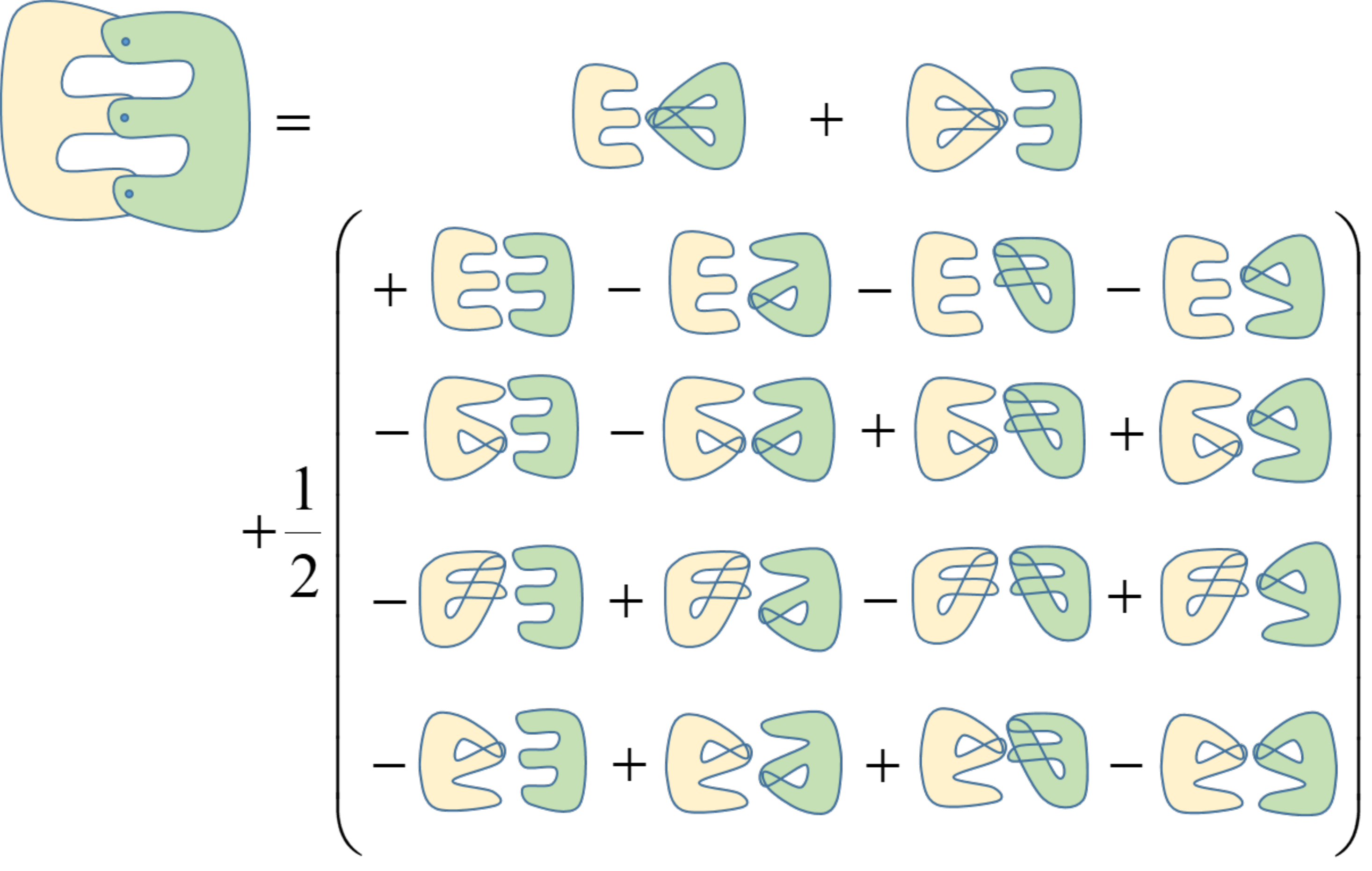}\\
  \end{center}
  \caption{Splitting of the quantum reliability operator for three sharing nodes.}\label{Fact3}
\end{figure}

Because of \eqref{Fact_L2}, we have the following relation among the sets of quantum states: there is a proper embedding
\begin{equation}\label{Fact_quantum_states}
q\Lambda_K\,\times\, q\Lambda_H\,\hookrightarrow\,q\Lambda_G,\qquad \left(\pi(\ket{\psi_K}),\,\pi(\ket{\psi_H})\right)\mapsto \pi\left(\ket{\psi_K}\otimes\ket{\psi_H}\right).
\end{equation}
Again, since the tensor product we are working with is balanced by $\C$, the embedding is well defined. The inclusion is proper as it was expected due to entanglement. It is interesting to compare the relation among quantum states sets \eqref{Fact_quantum_states} with the classical one \eqref{Fact_classical_states}. Again, the left hand side of \eqref{Fact_quantum_states} coincides with the corresponding one induced by $K/\gamma$ and $H/\gamma'$ for every pair of partitions $\gamma$ and $\gamma'$ in $\Gamma(U)$.

The next corollary shows that in contrast with the quantum reliability operator, the induced quantum reliability has a Negami like splitting on the left hand side of \eqref{Fact_quantum_states} and it will not be true in general on the whole quantum states space. The result also shows that the splitting formula for the quantum reliability operator in Theorem \ref{main2} recovers the classical one as it was expected.

\begin{cor}\label{cor1}
Consider the quantum reliability $QR$ induced by the quantum reliability operator as in Theorem \ref{main1}. Under the hypothesis of Theorem \ref{main2}, we have
$$QR_G\left(\,\pi(\ket{\psi_K}\otimes\ket{\psi_H})\,\right)\,=\,\sum_{\gamma,\,\gamma'\in \Gamma(U)}\ \beta_{\gamma,\gamma'}\ QR_{K/\gamma}\left(\,\pi(\ket{\psi_K})\,\right)\, QR_{H/\gamma'}\left(\,\pi(\ket{\psi_H})\,\right),$$
for every $\pi(\ket{\psi_K})$ in $q\Lambda_K$ and every $\pi(\ket{\psi_H})$ in $q\Lambda_H$. In particular, by the extension axiom in definition \ref{Definition_QR}, evaluating on a non entangled state gives the classical Negami's splitting formula. Moreover, in the particular case of two sharing nodes with $K$ being the simple two node graph, the splitting formula recovers the simple factorization formula \eqref{SimpleFact}.
\end{cor}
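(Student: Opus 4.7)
The plan is to deduce the corollary directly from Theorem \ref{main2} by evaluating the quadratic form on a non-entangled product vector, and then to obtain the two announced specializations by applying the extension axiom and by explicitly computing $\beta$ in the case $m=2$ with $K$ a single edge.

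First I would start from the operator identity \eqref{equation_main2} and sandwich both sides between $\bra{\psi_K\otimes\psi_H}$ and $\ket{\psi_K\otimes\psi_H}$. Using the elementary identity
$$\bra{\psi_K\otimes\psi_H}\,(A\otimes B)\,\ket{\psi_K\otimes\psi_H}\,=\,\bra{\psi_K} A \ket{\psi_K}\,\bra{\psi_H} B \ket{\psi_H},$$
valid for any operators $A$ on $l^2_\C(\Lambda_K)$ and $B$ on $l^2_\C(\Lambda_H)$, together with the defining relation $QR(\pi(\ket{\psi}))=\bra{\psi}\widehat{QR}\ket{\psi}$ of Theorem \ref{main1}, the claimed displayed formula follows at once. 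The right hand side is independent of the chosen representatives because each bilinear form $\bra{\cdot}\,\widehat{QR}_{K/\gamma}\,\ket{\cdot}$ and $\bra{\cdot}\,\widehat{QR}_{H/\gamma'}\,\ket{\cdot}$ is $U(1)$-invariant, so the formula descends to $q\Lambda_K\times q\Lambda_H$ through the embedding \eqref{Fact_quantum_states}.

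Next, to recover the classical Negami splitting, I would specialize to a fully non-entangled state by taking $\ket{\psi_K}=\ket{\psi_1}\otimes\cdots\otimes\ket{\psi_{n_K}}$ and $\ket{\psi_H}=\ket{\psi_{n_K+1}}\otimes\cdots\otimes\ket{\psi_n}$ with each factor a qubit of the form \eqref{qubit}. Because $E_{K/\gamma}=E_K$ and $E_{H/\gamma'}=E_H$ for all partitions $\gamma,\gamma'\in\Gamma(U)$, the Extension axiom of Definition \ref{Definition_QR} applies to every $K/\gamma$ and $H/\gamma'$ and turns each factor in the sum into a classical reliability, producing the classical Negami splitting formula of \cite{Burgos1}.

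Finally, for the two-sharing-nodes case with $K$ the simple two-node graph with the single edge $e$, I would enumerate $\Gamma(U)$ as $\gamma_0=\{\{u_1\},\{u_2\}\}$ and $\gamma_1=\{\{u_1,u_2\}\}$, compute
$$\alpha\,=\,\begin{pmatrix} 0 & 1 \\ 1 & 1 \end{pmatrix},\qquad \beta\,=\,\alpha^{-1}\,=\,\begin{pmatrix} -1 & 1 \\ 1 & 0 \end{pmatrix},$$
and identify $K/\gamma_0=K$ with $R_{K/\gamma_0}=p_e$, $K/\gamma_1$ a one-vertex graph (whose single loop edge does not affect connectivity) with $R_{K/\gamma_1}=1$, $H/\gamma_0=G-e$ and $H/\gamma_1=G\cdot e$. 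Substituting into the specialized Negami identity yields $R_G=(1-p_e)\,R_{G-e}+p_e\,R_{G\cdot e}$, which is exactly \eqref{SimpleFact}. I do not anticipate any real obstacle here, since all three parts are direct consequences of Theorem \ref{main2}, the bilinear factorization of expectations on product states, and the axioms of Definition \ref{Definition_QR}; the substantive difficulty has already been absorbed into the proof of Theorem \ref{main2}.
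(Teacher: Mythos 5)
Your proposal is correct and takes essentially the same route as the paper: the paper's proof is precisely your first step, sandwiching the operator identity \eqref{equation_main2} between $\bra{\psi_K}\otimes\bra{\psi_H}$ and $\ket{\psi_K}\otimes\ket{\psi_H}$ and factorizing the expectation over the tensor product. Your explicit verifications of the two specializations (the extension-axiom reduction via $E_{K/\gamma}=E_K$, $E_{H/\gamma'}=E_H$, and the $m=2$ computation with $\beta=\left(\begin{smallmatrix}-1 & 1\\ 1 & 0\end{smallmatrix}\right)$ recovering \eqref{SimpleFact}) are accurate details the paper states without proof.
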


\section{Hybrid network}\label{sectionIV}

Now, consider a classical network with an imperfect node of degree greater than or equal to three. A natural question is whether we can model this network with the theory of perfect nodes networks described at the beginning of this section, that is to say, whether we can find a network with perfect nodes that effectively works as the original one with the imperfect node. After a moment of thought, the reader may become convinced that this task is impossible within the context of classical networks.

However, in the context of quantum networks this task is achievable and we proceed as follows. Replace the degree $n\geq 3$ imperfect node $v$ with Bernoulli parameter $p$ by the quantum network with perfect nodes consisting of the complete graph $K_n$ as in Figure \ref{quantum_node} under the quantum state represented by
$$\ket{\psi}\,=\,p^{1/2}\,\ket{{\bf 11\ldots 1}}\,+\,q^{1/2}\,z\,\ket{{\bf 00\ldots 0}},\qquad z\in U(1),\qquad q=1-p.$$
The rest of the network remains classic. This is an example of a \textit{hybrid network}. It is clear that, due to the entanglement of the quantum state, this hybrid network effectively works as the original classical network.

\begin{figure}
\begin{center}
  \includegraphics[width=0.85\textwidth]{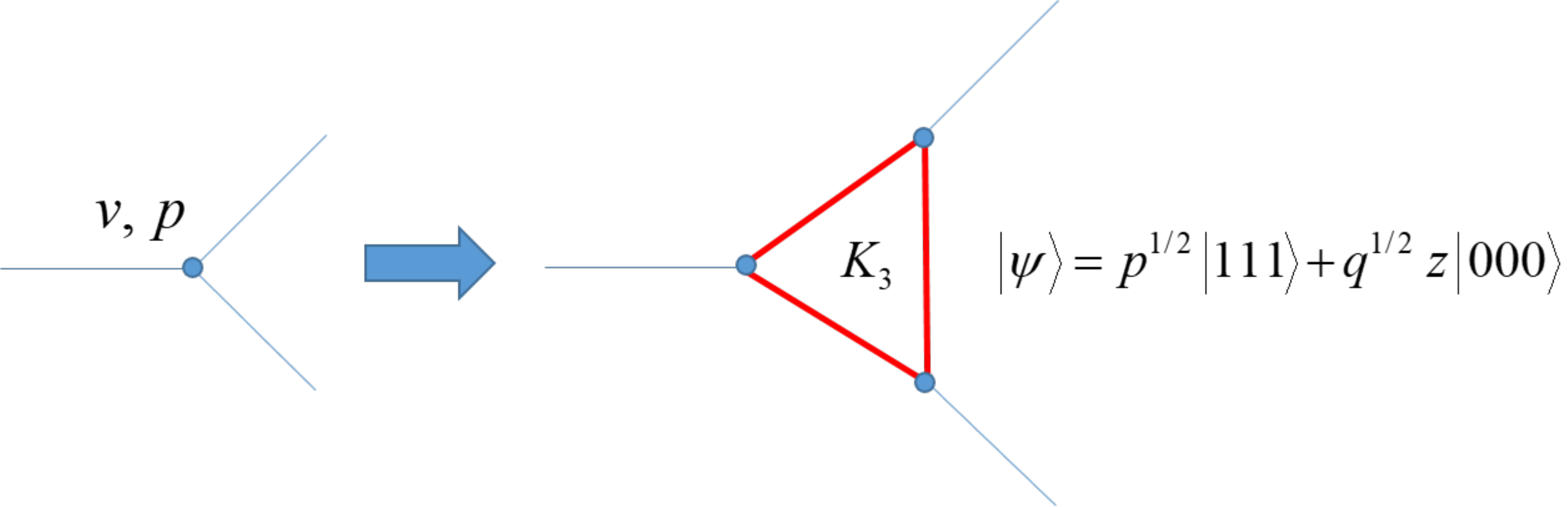}\\
  \end{center}
  \caption{Replacement of an imperfect node by a quantum network with perfect nodes.}\label{quantum_node}
\end{figure}

A hybrid network with perfect nodes canonically has the structure of the graph in Theorem \ref{main2}, that is to say it is canonically the union of two graphs $K$ and $H$ sharing only a finite set of vertices $U$ such that $K$ is a quantum network and $H$ is a classical network. Moreover, these subgraphs are unique with this property. Indeed, $K$ and $H$ are the smallest subgraphs of $G$ whose edges are the quantum and classical edges of $G$ respectively. This structure will be called the \textit{canonical decomposition of $G$ with sharing nodes set $U$}. The set of states of the hybrid network is
\begin{equation}\label{hybrid_states}
h\Lambda_G\,=\,q\Lambda_K\,\times\,\mbox{Conv}\,(\Lambda_H).
\end{equation}
Recall that $\mbox{Conv}\,(\Lambda_H)$ denotes the convex hull of the classical sates $\Lambda_H$ and this is the set of sequences of Bernoulli parameters on the set of edges of the graph $H$.

The quantum reliability of a hybrid state $(\zeta_K,\,(p_e)_{e\in E_H})\in h\Lambda_G$ is defined as the quantum reliability of the quantum state represented by
$$\ket{\Psi}\,=\,\ket{\psi}\otimes\bigotimes_{e\in E_H}\,\left(p_e^{1/2}\,\ket{{\bf 1}_e}\,+\,q_e^{1/2}\,\ket{{\bf 0}_e}\right),\qquad q_e=1-p_e$$
where $\ket{\psi}$ is some unit vector representing the quantum state $\zeta_K$.

From Corollary \ref{cor1} it immediately follows the following splitting formula that allows the calculation of the quantum network reliability of a hybrid network.

\begin{cor}\label{cor3}
Consider the quantum reliability $QR$ induced by the quantum reliability operator as in Theorem \ref{main1}. Consider a hybrid network $G$ under the hybrid sate $\zeta\in h\Lambda_G$ along with its canonical decomposition with sharing nodes set $U$. Denote by $K$ the quantum subgraph and by $H$ the classical subgraph of the decomposition. Then,
$$QR_G\,(\zeta)\,=\,\sum_{\gamma,\,\gamma'\in \Gamma(U)}\ \beta_{\gamma,\gamma'}\,QR_{K/\gamma}\,(\zeta_K)\, R_{H/\gamma'}\,\left(\,(p_e)_{e\in E_H}\right),\quad \zeta=\left(\zeta_K,\,(p_e)_{e\in E_H}\right).$$
\end{cor}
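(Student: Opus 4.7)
The plan is to reduce Corollary \ref{cor3} directly to Corollary \ref{cor1} together with the extension axiom of Definition \ref{Definition_QR}. The key observation is that the hybrid state, by the very way its quantum reliability is defined in the discussion preceding the statement, corresponds to a quantum state on the full graph $G$ which is non-entangled across the canonical decomposition $G=K\cup H$.

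First, I would fix a unit vector $\ket{\psi}$ representing $\zeta_K$ and set
$$\ket{\psi_H}\,=\,\bigotimes_{e\in E_H}\,\left(p_e^{1/2}\,\ket{{\bf 1}_e}\,+\,q_e^{1/2}\,\ket{{\bf 0}_e}\right),\qquad q_e=1-p_e,$$
so that $\ket{\Psi}=\ket{\psi}\otimes\ket{\psi_H}$ lies on the image of the proper embedding \eqref{Fact_quantum_states} associated to the decomposition of $G$ as $K\cup H$ with sharing vertices $U$. By definition, $QR_G(\zeta)=QR_G(\pi(\ket{\Psi}))$, and this is well defined because any phase change in the representative $\ket{\psi}$ of $\zeta_K$ is absorbed by the $U(1)$-invariance of $\widetilde{QR}_G$ in Theorem \ref{main1}.

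Next, I would apply Corollary \ref{cor1} to the non-entangled state $\pi(\ket{\psi}\otimes\ket{\psi_H})$ to obtain
$$QR_G\,(\zeta)\,=\,\sum_{\gamma,\gamma'\in\Gamma(U)}\,\beta_{\gamma,\gamma'}\,QR_{K/\gamma}\,(\pi(\ket{\psi}))\,QR_{H/\gamma'}\,(\pi(\ket{\psi_H})).$$
Since $\ket{\psi_H}$ is by construction a tensor product of qubits of the form required in the extension axiom and the edge sets satisfy $E_{H/\gamma'}=E_H$ for every partition $\gamma'\in\Gamma(U)$, the extension axiom applied to each classical subnetwork $H/\gamma'$ gives
$$QR_{H/\gamma'}\,(\pi(\ket{\psi_H}))\,=\,R_{H/\gamma'}\,\left(\,(p_e)_{e\in E_H}\right).$$
Finally, identifying $QR_{K/\gamma}(\pi(\ket{\psi}))=QR_{K/\gamma}(\zeta_K)$ and substituting yields the claimed formula.

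I do not foresee any genuine obstacle: the only points requiring attention are (i) checking that the recipe defining $QR_G(\zeta)$ is independent of the choice of phase representative for $\zeta_K$, which follows from the $U(1)$-invariance underlying Theorem \ref{main1}, and (ii) the identification $E_{H/\gamma'}=E_H$, which is immediate since taking quotients by partitions of $U$ only identifies vertices and leaves the edge set untouched. Everything else is a direct specialization of Corollary \ref{cor1}.
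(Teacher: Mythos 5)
Your proposal is correct and follows exactly the route the paper takes: its proof of Corollary \ref{cor3} is precisely ``the definition of a hybrid network, Corollary \ref{cor1} and the extension axiom in Definition \ref{Definition_QR},'' which you have merely spelled out in detail, including the valid side remarks on $U(1)$-independence of the representative of $\zeta_K$ and on $E_{H/\gamma'}=E_H$ (the latter already noted in the paper before Theorem \ref{main2}). No gaps.
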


\section{Classical network with a quantum sublayer}\label{sectionVI}

A particular instance of a hybrid network is that of a \textit{classical network with a quantum sublayer} that we now define. See Figure \ref{quantum_sublayer}.

\begin{figure}
\begin{center}
  \includegraphics[width=0.6\textwidth]{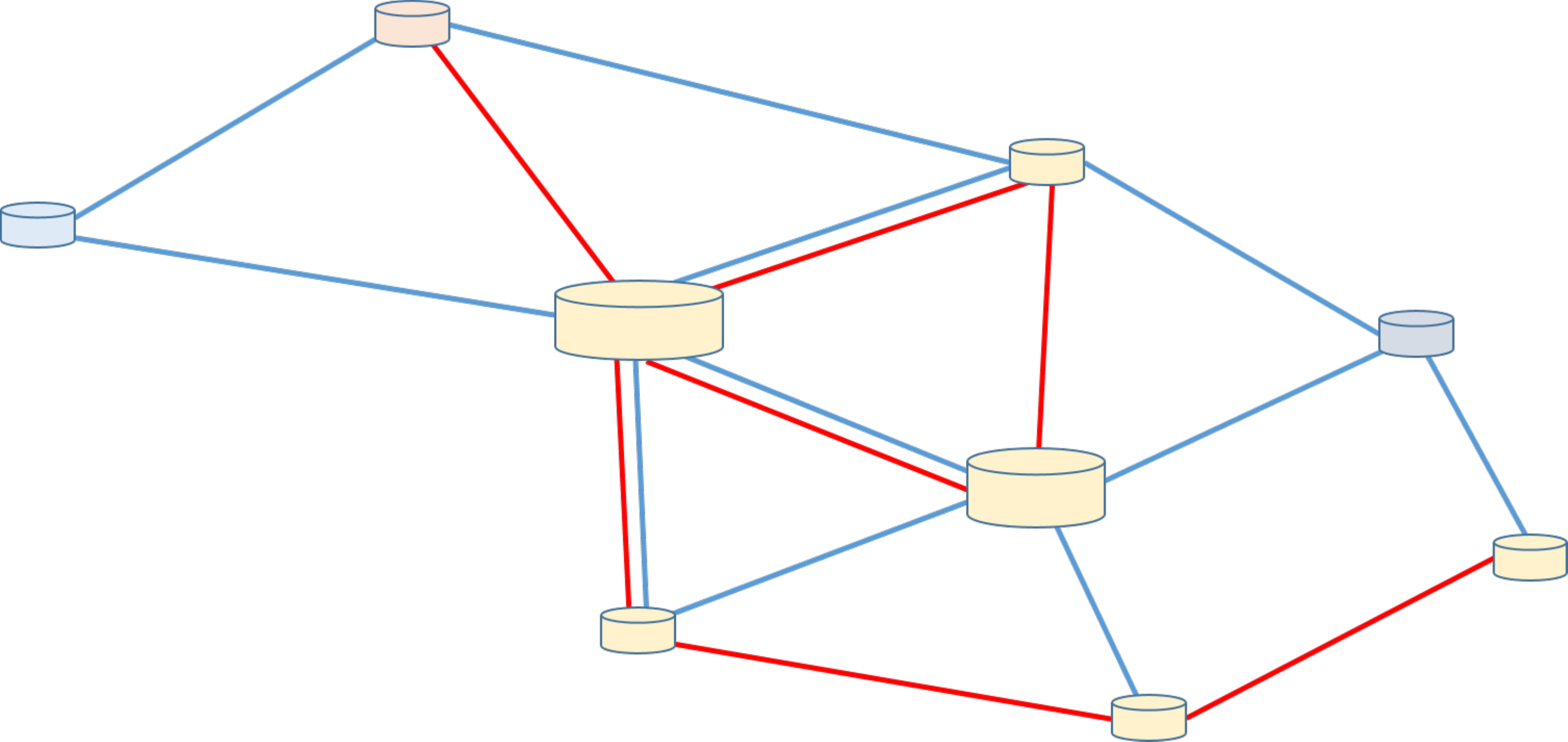}\\
  \end{center}
  \caption{A classical network (in blue) with a quantum sublayer (in red).}\label{quantum_sublayer}
\end{figure}

\begin{defi}
A classical network $H$ with a quantum sublayer $K$ is a hybrid network $G$ whose canonical decomposition consists of the subgraphs $K$ and $H$ such that the set of vertices of $K$ is contained in that of $H$. We will say that $G$ realizes the network $H$ with quantum sublayer $K$.
\end{defi}

In the previous definition, note that the set of sharing nodes of the canonical decomposition of the hybrid network consists of the nodes of the quantum sublayer. The quantum reliability of the hybrid network realizing a classical network with a quantum sublayer is the sum of the reliability of the classical network with quantum corrections caused by the quantum sublayer. Indeed, from Corollary \ref{cor3} we have the following result.

\begin{cor}\label{cor4}
Consider the quantum reliability $QR$ induced by the quantum reliability operator as in Theorem \ref{main1}. Consider a classical network $H$ with a quantum sublayer $K$ realized by the hybrid network $G$. Then,
$$QR_G\,(\zeta)\,=\,R_H\,\left(\,(p_e)_{e\in E_H}\right)\,+\,\sum_{\substack{\gamma,\,\gamma'\in \Gamma(V_K) \\ \gamma\neq {\bf t} }}\ \beta_{\gamma,\gamma'}\,QR_{K/\gamma}\,(\zeta_K)\, R_{H/\gamma'}\,\left(\,(p_e)_{e\in E_H}\right)$$
where $V_K$ is the set of nodes of $K$ and $\zeta\,=\,\left(\zeta_K,\,(p_e)_{e\in E_H}\right)\,\in\, h\Lambda_G$.
\end{cor}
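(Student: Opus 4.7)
The plan is to extract from the general formula in Corollary \ref{cor3} the single summand indexed by $\gamma=\mathbf{t}$ (the partition of $V_K$ into one block) and show that it collapses to $R_H\bigl((p_e)_{e\in E_H}\bigr)$, leaving the remaining sum exactly as in the statement. Since a classical network with a quantum sublayer satisfies $V_K\subseteq V(H)$, the set of sharing nodes of the canonical decomposition equals $U=V_K$, and Corollary \ref{cor3} reads
\begin{equation*}
QR_G(\zeta)\;=\;QR_{K/\mathbf{t}}(\zeta_K)\sum_{\gamma'\in\Gamma(V_K)}\beta_{\mathbf{t},\gamma'}\,R_{H/\gamma'}\bigl((p_e)_{e\in E_H}\bigr)\;+\;\sum_{\substack{\gamma,\gamma'\in \Gamma(V_K) \\ \gamma\neq \mathbf{t}}}\beta_{\gamma,\gamma'}\,QR_{K/\gamma}(\zeta_K)\,R_{H/\gamma'}\bigl((p_e)_{e\in E_H}\bigr).
\end{equation*}
It therefore suffices to prove the two identities $QR_{K/\mathbf{t}}\equiv 1$ and $\sum_{\gamma'}\beta_{\mathbf{t},\gamma'}R_{H/\gamma'}=R_H$.

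For the first identity, observe that $K/\mathbf{t}$ is a single-vertex multigraph obtained by collapsing all of $V_K$ to one node, so every classical state in $\Lambda_{K/\mathbf{t}}$ is connected. Consequently $R_{K/\mathbf{t}}\equiv 1$, the extension axiom in Definition \ref{Definition_QR} reduces to the constant function $1$ on every non-entangled state, and the entanglement axiom is vacuous for lack of a non-connected classical state. The identity operator on $l^2_{\C}(\Lambda_{K/\mathbf{t}})$ is therefore a self-adjoint realization of a quantum reliability on $K/\mathbf{t}$, and by the uniqueness part of Theorem \ref{main1} it must coincide with $\widehat{QR}_{K/\mathbf{t}}$. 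Hence $QR_{K/\mathbf{t}}(\zeta_K)=\braket{\psi}{\psi}=1$ for every unit representative $\ket{\psi}$ of $\zeta_K$.

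For the second identity, I would use only the defining relation $\alpha\beta=\mathrm{Id}$ together with the explicit form of $\alpha$. Let $\mathbf{f}\in\Gamma(V_K)$ denote the partition into singletons; since $\mathbf{f}$ is the top of the partition lattice under Negami's order, $\mathbf{f}\wedge\gamma=\gamma$ for every $\gamma$, and therefore $\alpha_{\mathbf{f},\gamma}=\delta_{\gamma,\mathbf{t}}$ (as $\gamma$ has a single block iff $\gamma=\mathbf{t}$). Reading the $(\mathbf{f},\gamma')$-entry of $\alpha\beta=\mathrm{Id}$ yields $\beta_{\mathbf{t},\gamma'}=\delta_{\gamma',\mathbf{f}}$, and because $H/\mathbf{f}=H$ we obtain $\sum_{\gamma'}\beta_{\mathbf{t},\gamma'}R_{H/\gamma'}=R_{H/\mathbf{f}}=R_H$. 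Substituting back into the displayed equation above gives exactly the claimed formula. The only delicate step is the first identity, which must invoke the uniqueness of the quantum reliability operator from Theorem \ref{main1} rather than a direct calculation; the row-of-$\beta$ computation is an immediate consequence of $\alpha\beta=\mathrm{Id}$ and the fact that $\mathbf{f}$ is the top of the Negami lattice.
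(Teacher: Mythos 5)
Your proposal is correct and follows essentially the paper's own route: split off the $\gamma=\mathbf{t}$ terms in Corollary \ref{cor3}, obtain $\beta_{\mathbf{t},\gamma'}=\delta_{\mathbf{f},\gamma'}$ from $\alpha_{\mathbf{f},\gamma}=\delta_{\mathbf{t},\gamma}$ together with $\alpha\beta=\mathrm{Id}$, note $H/\mathbf{f}=H$, and use $QR_{K/\mathbf{t}}\equiv 1$. The one cosmetic difference is that where you invoke the uniqueness clause of Theorem \ref{main1} to identify $\widehat{QR}_{K/\mathbf{t}}$ with the identity, the paper gets this directly (and your closing remark that uniqueness is unavoidable is not quite right): since every state of $K/V_K$ is connected, the operator constructed in the proof of Theorem \ref{main1} --- the orthogonal projection onto the span of the connected states --- is already the identity, so $QR_{K/\mathbf{t}}(\pi(\ket{\psi}))=\braket{\psi}{\psi}=1$ by direct calculation.
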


This result is in agreement with the obvious fact that the quantum reliability of a classical network with an inoperative quantum sublayer must coincide with the classical reliability of the classical layer.


\section{Proof of Theorem \ref{main1}}\label{Section_proof_1}

Let $\mathcal{C}\subset \Lambda_G$ be the set of connected states and consider the subspace generated by it $\langle\,\mathcal{C}\,\rangle <l^2_\C(\Lambda_G)$. Define the self adjoint operator $\widehat{QR}_G$ as the orthogonal projection onto $\langle\,\mathcal{C}\,\rangle$,
$$\widehat{QR}_G:\,l^2_\C(\Lambda_G)\rightarrow l^2_\C(\Lambda_G).$$
Concretely, with respect to the basis $\Lambda_G$, $\widehat{QR}_G(\,\ket{\epsilon}\,)\,=\,\ket{\epsilon}$ if $\ket{\epsilon}$ is connected and $\widehat{QR}_G(\,\ket{\epsilon}\,)\,=\,{\bf 0}$ otherwise.

Define the function $\widetilde{QR}_G: S\rightarrow\R$ by the formula
$$\widetilde{QR}_G(\,\ket{\psi}\,)\,=\,\bra{\psi}\,\widehat{QR}_G\,\ket{\psi},\qquad \ket{\psi}\in S$$
where $S\,\subset\, l^2_\C(\Lambda_G)$ is the unit sphere. Because
$$0\,\leq\, \braket{\widehat{QR}_G\,\psi\,}{\,\widehat{QR}_G\,\psi}\,=\,\bra{\psi}\,\widehat{QR}_G\,\ket{\psi}\,\leq\, \braket{\psi}{\psi}=\,1,\qquad \ket{\psi}\in S$$
we have that $\widetilde{QR}_G$ is valued in the interval $[0,1]$. Since
$$\widetilde{QR}_G(\,\mu\ket{\psi}\,)\,=\,\bra{\psi}\bar{\mu}\,\widehat{QR}_G\,\mu\ket{\psi}\,=\,\bar{\mu}\mu\,\bra{\psi}\,\widehat{QR}_G\,\ket{\psi}\,=\,\widetilde{QR}_G(\,\ket{\psi}\,),\qquad \mu\in U(1),$$
the function $\widetilde{QR}_G$ factors through $\pi$ and defines the function $QR_G$, that is to say
$$\xymatrix{ S\ar[rr]^{\widetilde{QR}_G} \ar[d]_\pi & & [0,1] \\
q\Lambda_G \ar[urr]_{QR_G} & & }$$
and the diagram is commutative.

We will show that $QR_G$ is a quantum reliability. Consider the state
$$\ket{\psi}\,=\,\bigotimes_{e\in E}\,\mu_e\,\left(p_e^{1/2}\,\ket{{\bf 1}_e}\,+\,z_e\,q_e^{1/2}\,\ket{{\bf 0}_e}\right),\qquad q_e=1-p_e.$$
After some tedious but elementary calculation, this state can be expressed as
$$\ket{\psi}\,=\,\mu\,\sum_{\epsilon\in \Lambda_G}\,\left(\prod_{e\in E}\,(p_e^{1/2})^{\epsilon(e)}\,(z_e\,q_e^{1/2})^{1-\epsilon(e)}\right)\,\ket{\epsilon},\qquad \mu=\prod_{e\in E}\,\mu_e.$$
In particular, we have
$$\widehat{QR}_G(\,\ket{\psi}\,)\,=\,\mu\,\sum_{\epsilon\in \mathcal{C}}\,\left(\prod_{e\in E}\,(p_e^{1/2})^{\epsilon(e)}\,(z_e\,q_e^{1/2})^{1-\epsilon(e)}\right)\,\ket{\epsilon}$$
and the Riesz dual
$$\bra{\psi}\,=\,\bar{\mu}\,\sum_{\epsilon\in \Lambda_G}\,\left(\prod_{e\in E}\,(p_e^{1/2})^{\epsilon(e)}\,(\bar{z}_e\,q_e^{1/2})^{1-\epsilon(e)}\right)\,\bra{\epsilon}.$$
Because the basis $\{\,\ket{\epsilon}\ |\ \epsilon\in\Lambda_G\,\}$ is orthonormal, we have
$$QR_G\left(\,\pi(\ket{\psi})\,\right)\,=\,\bra{\psi}\,\widehat{QR}_G\,\ket{\psi}\,=\,\sum_{\epsilon\in \mathcal{C}}\,\prod_{e\in E}\,p_e^{\epsilon(e)}\,q_e^{1-\epsilon(e)}\,=\, R_G(p_1,\ldots, p_n).$$
This proves the extension axiom of a quantum reliability.

For the entanglement axiom, consider a connected state $\ket{\zeta}$, a non connected state $\ket{\chi}$ and $z$ in $U(1)$. Then,
$$\left(\bar{\mu}\,\left(p^{1/2}\,\bra{\zeta}\,+\,q^{1/2}\,\bar{z}\,\bra{\chi}\right)\right)\,\widehat{QR}_G\,\left(\mu\,\left(p^{1/2}\,\ket{\zeta}\,+\,q^{1/2}\,z\,\ket{\chi}\right)\right)$$
$$\,=\,\bar{\mu}\mu\,\left(p\,\braket{\zeta}{\zeta}\,+\,(qp)^{1/2}\,\bar{z}\,\braket{\chi}{\zeta}\right)\,=\,p,\qquad q=1-p,$$
since the states are orthonormal. This concludes the proof that $QR_G$ is a quantum reliability.

Suppose that $L:\,l^2_\C(\Lambda_G)\rightarrow l^2_\C(\Lambda_G)$ is a self adjoint operator such that the function $f:\,q\Lambda_G\rightarrow [0,1]$ defined by the commutative diagram
$$\xymatrix{ S\ar[rr]^{\tilde{f}} \ar[d]_\pi & & [0,1] \\
q\Lambda_G \ar[urr]_{f} & & },\qquad \tilde{f}(\,\ket{\psi}\,)\,=\,\bra{\psi}\,L\,\ket{\psi},\qquad \ket{\psi}\in S.$$
is a quantum reliability. By the extension axiom, $\lambda_\epsilon$ equals to one if $\epsilon$ is connected and equals zero otherwise where we have defined
$$\lambda_\epsilon=\bra{\epsilon}\,L\,\ket{\epsilon},\qquad \epsilon\in \Lambda_G.$$

We will show now that the off diagonal elements are zero. Consider a pair of distinct classical states $\epsilon$ and $\epsilon'$ in $\Lambda_G$. Suppose that $c\,=\,\bra{\epsilon}\,L\,\ket{\epsilon'}\,\neq\,0$ and consider the states
$$\ket{\psi_\pm}\,=\,\frac{1}{2^{1/2}}\,\left(\ket{\epsilon}\,\pm\,\frac{\bar{c}}{\Vert c \Vert}\,\ket{\epsilon'}\right).$$
Since $L$ is self addjoint, we have
$$f(\pi(\ket{\psi_\pm}))\,=\,\tilde{f}(\,\ket{\psi_\pm}\,)\,=\,\bra{\psi_\pm}\,L\,\ket{\psi_\pm}\,=\,\frac{1}{2}\,\left(\lambda_\epsilon\,+\,\lambda_{\epsilon'}\right)\,\pm\,\Vert c \Vert.$$
Then, either $f$ takes values greater than one or less than zero or contradicts the entanglement axiom and this is absurd.

We conclude that the operator $L$ coincides with $\widehat{QR}_G$ since
$$L(\ket{\epsilon})\,=\,\sum_{\epsilon'\in\Lambda_G}\,\bra{\epsilon'}\,L\,\ket{\epsilon}\ \ket{\epsilon'}\,=\,\lambda_\epsilon\,=\widehat{QR}_G\,(\ket{\epsilon}),\qquad \epsilon\in\Lambda_G$$
and $\Lambda_G$ is a basis.

Finally, under the assumption of non contextuality, A. M. Gleason \cite{Gleason} proved the Born rule which states that the probability of observing the state $\ket{\epsilon}$ in the system under the state $\ket{\psi}$ after measurement is
$$\mbox{Prob}(\epsilon)\,=\,\vert\,\braket{\epsilon\,}{\,\psi}\,\vert^2.$$
and these events are independent hence
$$\mbox{Prob}\,\left([\epsilon\,\in\, \mathcal{C}]\right)\,=\,\sum_{\epsilon\in \mathcal{C}}\, \mbox{Prob}(\epsilon)\,=\,
\sum_{\epsilon\in \mathcal{C}}\, \vert\,\braket{\epsilon\,}{\,\psi}\,\vert^2\,=\,\,\bra{\psi}\,\widehat{QR}_G\,\ket{\psi} $$
and this concludes the proof.

\section{Proof of Theorem \ref{main2}}\label{Section_proof_2}

Let $G$ be a graph obtained as a union of two graphs $K$ and $H$ sharing only the vertices $U=\{u_{1},\ldots u_{m}\}$.

Given a graph $W$ and a state $\epsilon\in \Lambda_W$, define the graph $W(\epsilon)$ as the graph resulting from removing every edge $e$ from $W$ such that $\epsilon(e)=0$. We will say that a state $\epsilon$ in $\Lambda_W$ is connected if $W(\epsilon)$ is connected and the set of these states will be denoted by $\mathcal{C}(W)$.

Given a state $\epsilon\in \Lambda_H$, a connected component $C$ of $H(\epsilon)$ is an \textit{island in} $H(\epsilon)$ if $C\cap U=\emptyset$. We will say that \textit{a state $\epsilon\in \Lambda_H$ has no islands} if the graph $H(\epsilon)$ has no islands and we will denote the set of these states by $\hat{\Lambda}_H$. Since connected components coincide or are disjoint, the following set is a partition of $U$
\begin{equation}\label{Partition_function}
\mathcal{P}(\epsilon)\,=\,\{\,U\cap C\ |\ C{\rm\ is\ a\ connected\ component\ of\ }H(\epsilon)\,\},\qquad \epsilon\in\hat{\Lambda}_H.
\end{equation}

\begin{lema}\label{Lema1}
For every partition $\gamma$ of $U$, we have the following characterization
$$\mathcal{C}(H/\gamma)\,=\,\{\,\epsilon\in \hat{\Lambda}_H\ |\ \vert\,\mathcal{P}(\epsilon)\wedge\gamma\,\vert=1\,\}.$$
\end{lema}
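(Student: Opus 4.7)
The plan is to prove the two inclusions by interpreting connectivity of $(H/\gamma)(\epsilon)$ in terms of a ``super-component'' equivalence relation on $U$. First I would set up terminology: the vertices of $(H/\gamma)(\epsilon)$ are the interior vertices $V(H)\setminus U$ together with the $\gamma$-block vertices, and the edges are those $e\in E_H$ with $\epsilon(e)=1$ (with endpoints in $U$ replaced by their $\gamma$-blocks). The key combinatorial observation is that, with the paper's convention for the partition lattice, $\mathcal{P}(\epsilon)\wedge \gamma$ is the partition of $U$ whose blocks are the equivalence classes of the transitive closure of the relation ``$u\sim u'$ if $u,u'$ lie in a common block of $\mathcal{P}(\epsilon)$ or a common block of $\gamma$''. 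I would record this as a short preliminary remark so that $|\mathcal{P}(\epsilon)\wedge \gamma|=1$ can be read as ``every pair $u,u'\in U$ is joined by an alternating chain through components of $H(\epsilon)$ and blocks of $\gamma$''.

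For the inclusion $\mathcal{C}(H/\gamma)\subseteq\{\epsilon\in\hat\Lambda_H:|\mathcal{P}(\epsilon)\wedge\gamma|=1\}$, I would assume $(H/\gamma)(\epsilon)$ is connected and argue in two steps. Step one is that $\epsilon\in\hat\Lambda_H$: if $C$ were an island of $H(\epsilon)$, then no vertex of $C$ lies in $U$, so none of its vertices are identified to anything outside $C$, and no edge of $H(\epsilon)$ leaves $C$; hence $C$ remains a full connected component of $(H/\gamma)(\epsilon)$, contradicting connectivity (the complementary vertex set contains the $\gamma$-blocks, which is non-empty). Step two is that any two $u,u'\in U$ lie in the same block of $\mathcal{P}(\epsilon)\wedge\gamma$: a path in $(H/\gamma)(\epsilon)$ between the $\gamma$-blocks of $u$ and $u'$ decomposes into edge-segments inside $H(\epsilon)$ (each such segment stays in one component of $H(\epsilon)$, i.e.\ one block of $\mathcal{P}(\epsilon)$) separated by transitions at $\gamma$-block vertices (which correspond to jumps within blocks of $\gamma$). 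This realises exactly the alternating chain described above, so $u$ and $u'$ are in the same super-component.

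For the reverse inclusion, I would assume $\epsilon\in\hat\Lambda_H$ with $|\mathcal{P}(\epsilon)\wedge\gamma|=1$ and construct paths in $(H/\gamma)(\epsilon)$. Any interior vertex $v\in V(H)\setminus U$ is connected in $H(\epsilon)$ to some $u\in U$ because $\epsilon$ has no islands, hence $v$ is connected in $(H/\gamma)(\epsilon)$ to the $\gamma$-block of $u$; it therefore suffices to connect any two $\gamma$-block vertices. By the preliminary remark, for any $u,u'\in U$ there is a chain $u=u_0,u_1,\dots,u_k=u'$ where each consecutive pair is either in the same block of $\mathcal{P}(\epsilon)$ (yielding a path inside a connected component of $H(\epsilon)$) or in the same block of $\gamma$ (yielding equality in $(H/\gamma)(\epsilon)$). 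Concatenating these segments produces a path in $(H/\gamma)(\epsilon)$ between the two $\gamma$-blocks, establishing connectivity.

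The main obstacle is the preliminary identification of $\mathcal{P}(\epsilon)\wedge\gamma$ with the super-component partition. Because the paper's order on $\Gamma(U)$ is opposite to the standard refinement order, one has to check carefully that the infimum $\wedge$ in this order is the join in the usual convention, i.e.\ the coarsest partition refining both $\mathcal{P}(\epsilon)$ and $\gamma$ in the paper's sense, or equivalently the transitive closure of the two equivalence relations. Once this translation is in place the rest of the argument is a straightforward case analysis on paths in $(H/\gamma)(\epsilon)$.
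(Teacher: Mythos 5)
Your proposal is correct and takes essentially the same route as the paper: both directions are proved by matching connected components of $(H/\gamma)(\epsilon)$ with blocks of $\mathcal{P}(\epsilon)\wedge\gamma$, your alternating-chain decomposition being exactly the combinatorial content of the paper's quotient map $\varphi:H(\epsilon)\rightarrow H(\epsilon)/\gamma$ and its component preimages. The only (welcome) differences are presentational: you spell out the lattice translation that $\wedge$ in the paper's reversed order is the transitive closure of the two equivalence relations, and you explicitly verify $\mathcal{C}(H/\gamma)\subseteq\hat{\Lambda}_H$ via the island argument, a step the paper leaves implicit by starting its proof with $\epsilon\in\hat{\Lambda}_H$.
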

\begin{proof}
Consider a state $\epsilon\in \hat{\Lambda}_H$ and the map that identifies the vertices of each class of $U$,
$$\varphi:\,H(\epsilon)\rightarrow H(\epsilon)/\gamma.$$
This map is continuous and surjective.

Suppose that $\vert\,\mathcal{P}(\epsilon)\wedge\gamma\,\vert>1$ and consider distinct classes $A$ and $A'$ in $\mathcal{P}(\epsilon)\wedge\gamma$. Consider all the connected components $C_1,\ldots, C_l$ and $C'_1,\ldots, C'_{l'}$ such that
\begin{equation}\label{intersections}
U\cap C_i\,\subset\,A,\qquad U\cap C'_j\,\subset\,A'.
\end{equation}
Since $\epsilon$ has no islands and $\mathcal{P}(\epsilon)$ is a refinement of $\mathcal{P}(\epsilon)\wedge\gamma$, all the intersection in \eqref{intersections} are non empty and fulfill the classes $A$ and $A'$ respectively. By definition of the map $\varphi$, we have that
$$\varphi(C_1)\cup\ldots\cup\varphi(C_l),\qquad \varphi(C'_1)\cup\ldots\cup\varphi(C'_{l'})$$
are distinct connected components of $H(\epsilon)/\gamma$ hence this graph is not connected.

Conversely, suppose that $H(\epsilon)/\gamma$ is not connected and consider distinct connected components $B$ and $B'$ of it. Since $\mathcal{P}(\epsilon)$ is a refinement of $\mathcal{P}(\epsilon)\wedge\gamma$, the preimages $\varphi^{-1}(B)$ and $\varphi^{-1}(B')$ contain connected components of $H(\epsilon)$ and because $\epsilon$ has no islands the sets $U\cap\varphi^{-1}(B)$ and $U\cap\varphi^{-1}(B')$ are non empty distinct classes in $\mathcal{P}(\epsilon)\wedge\gamma$ hence $\vert\,\mathcal{P}(\epsilon)\wedge\gamma\,\vert>1$. This concludes the proof.
\end{proof}

\begin{lema}\label{Lema2}
We have the identity
$$\mathcal{C}(G)\,=\,\bigcup_{\epsilon\in \hat{\Lambda}_H}\,\mathcal{C}\left(\,K/\,\mathcal{P}(\epsilon)\,\right)\times\{\,\epsilon\,\}$$
where the union is disjoint.
\end{lema}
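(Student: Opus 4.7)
The plan is to exploit the factorization $\Lambda_G = \Lambda_K \times \Lambda_H$ established earlier, writing each state as a pair $(\epsilon_K,\epsilon_H)$, and to prove the biconditional
$$(\epsilon_K,\epsilon_H)\in \mathcal{C}(G)\ \ \Longleftrightarrow\ \ \epsilon_H\in\hat{\Lambda}_H\ \text{and}\ \epsilon_K\in \mathcal{C}\!\left(K/\mathcal{P}(\epsilon_H)\right).$$
Disjointness of the union on the right-hand side is automatic from this parametrization, since distinct $\epsilon\in\hat{\Lambda}_H$ are distinguished by the second coordinate of every pair $(\epsilon_K,\epsilon)$ occurring in $\mathcal{C}(K/\mathcal{P}(\epsilon))\times\{\epsilon\}$.

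For necessity of the no-islands condition I would argue by contrapositive. If $H(\epsilon_H)$ contains an island $C$, then by definition $C\cap U=\emptyset$; since $K$ and $H$ meet only along $U$, no edge of $K(\epsilon_K)$ has an endpoint in $C$, and no edge of $H(\epsilon_H)$ leaves $C$ either. Consequently $C$ is a proper connected component of $G(\epsilon_K,\epsilon_H)$, so this state is not connected.

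Assuming $\epsilon_H\in\hat{\Lambda}_H$, the heart of the matter is the following observation about the path-connectivity equivalence relation on $V_G=V_K\cup V_H$. Restricted to $V_K$, this relation is generated by the edges of $K(\epsilon_K)$ together with the equivalence on $U$ that identifies $u,u'\in U$ whenever they lie in the same connected component of $H(\epsilon_H)$, which is precisely the partition $\mathcal{P}(\epsilon_H)$ defined in \eqref{Partition_function}. Moreover, the no-islands hypothesis guarantees that every $v\in V_H\setminus U$ is $H$-connected to some vertex of $U$, so in $G(\epsilon_K,\epsilon_H)$ it lies in the same class as a vertex of $V_K$. Combining these two facts, $G(\epsilon_K,\epsilon_H)$ is connected if and only if the vertices of $V_K$ collapse to a single class under the quotient, i.e.\ if and only if $K(\epsilon_K)/\mathcal{P}(\epsilon_H)$ is connected, which is exactly $\epsilon_K\in\mathcal{C}(K/\mathcal{P}(\epsilon_H))$.

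The main subtlety to handle carefully is the clean separation in the last step between the $H$-contribution to connectivity on $V_K$ (captured entirely by $\mathcal{P}(\epsilon_H)$ under the no-islands assumption) and the bookkeeping of the auxiliary vertices in $V_H\setminus U$; one must verify both that the $H$-edges impose no further identifications on $V_K$ beyond $\mathcal{P}(\epsilon_H)$ and that no vertex of $V_H\setminus U$ remains stranded outside the class of $V_K$. Both points are transparent once no islands is assumed, and then the biconditional immediately yields the stated set equality.
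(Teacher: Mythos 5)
Your proposal is correct, and its skeleton matches the paper's: both factor a state as $(\epsilon_K,\epsilon_H)$ via $\Lambda_G=\Lambda_K\times\Lambda_H$, both get necessity of the no-islands condition by noting that an island of $H(\epsilon_H)$ would be a proper connected component of $G(\epsilon)$ (no edge of $K$ can touch it since $V_K\cap V_H=U$), and both reduce connectedness of $G(\epsilon)$ to connectedness of $K(\epsilon_K)/\mathcal{P}(\epsilon_H)$. The difference is in how that reduction is carried out. The paper argues the two directions separately: for one direction it uses the continuous surjection $\nu:G(\epsilon)\rightarrow K(\epsilon_K)/\mathcal{P}(\epsilon_H)$ collapsing each component of $H(\epsilon_H)$, so the image of a connected graph is connected; for the converse it shows every component of $G(\epsilon)$ meets $U$, takes a path in the quotient joining two putative components, lifts it to $K(\epsilon_K)$, and bridges with paths in $H(\epsilon_H)$. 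You instead prove a single structural claim --- the path-connectivity relation of $G(\epsilon)$ restricted to $V_K$ is generated by the edges of $K(\epsilon_K)$ together with $\mathcal{P}(\epsilon_H)$, and no vertex of $V_H\setminus U$ is stranded --- which delivers both directions at once. This buys something: the decomposition of a $G$-path into $K$-segments and $H$-segments switching only at $U$ (legitimate because $E_G=E_K\sqcup E_H$ and $V_K\cap V_H=U$) makes explicit exactly what is hidden in the paper's lifting step, whose ``unique lifting'' phrasing is not literally accurate, since a path in the quotient lifts only up to jumps within $\mathcal{P}(\epsilon_H)$-classes --- precisely the jumps your $H$-segments account for. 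One small correction: of your two flagged subtleties, the first (that $H$-edges identify vertices of $V_K$ only within $\mathcal{P}(\epsilon_H)$-classes) needs no hypothesis at all; the no-islands assumption is required only for the stranded-vertex point, and in the other direction it is forced by connectedness, as your contrapositive shows.
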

\begin{proof}
Consider a connected state $\epsilon\in \Lambda_G$ and its unique decomposition $(\epsilon_K,\epsilon_H)$ such that $\epsilon_K\in \Lambda_K$ and $\epsilon_H\in \Lambda_H$. Clearly $\epsilon_H$ has no islands. Indeed, if there were an island in $H(\epsilon_H)$, then this would be a proper connected component in $G(\epsilon)$ which is absurd because $\epsilon$ is connected.

Consider the map that identifies each connected component $C$ of the subgraph $H(\epsilon_H)$ into a point corresponding to the class $U\cap C$ in the partition $\mathcal{P}(\epsilon_H)$,
$$\nu:\,G(\epsilon)\rightarrow K(\epsilon_K)/\mathcal{P}(\epsilon_H).$$
This map is continuous and surjective. In particular,
$$1\,\leq \,\vert\,Comp\,(K(\epsilon_K)/\mathcal{P}(\epsilon_H))\,\vert\,\leq\,\vert\,Comp\,(G(\epsilon))\,\vert\,=\,1$$
where $Comp$ denotes the set of connected components of the respective set. Therefore, the graph $K(\epsilon_K)/\mathcal{P}(\epsilon_H)$ is connected.

Conversely, consider a state $\epsilon\in \Lambda_G$ such that $\epsilon_H$ has no islands and $K(\epsilon_K)/\mathcal{P}(\epsilon_H)$ is connected where $(\epsilon_K,\epsilon_H)$ is the unique decomposition described before. Again, the map $\nu$ described before is continuous and surjective. It is clear that $\epsilon_K$ has no islands either since otherwise $K(\epsilon_K)/\mathcal{P}(\epsilon_H)$ would have a proper connected component which is absurd.

We claim that every connected component $A$ of $G(\epsilon)$ satisfies $A\cap U\neq \emptyset$. Indeed, if $A\cap U= \emptyset$, then either $A$ is an island in $H(\epsilon_H)$ or $A$ is an island in $K(\epsilon_K)$ and both alternatives are absurd.

Suppose that $\epsilon$ is not connected and consider distinct connected components $B$ and $B'$ of $G(\epsilon)$. Consider points $a\in B\cap U$ and $b\in B'\cap U$. There is a path $l$ joining $\varphi(a)$ and $\varphi(b)$ in $K(\epsilon_K)/\mathcal{P}(\epsilon_H)$. There is a unique lifting $l'$ in $K(\epsilon_K)$ of the path $l$ by the map $\nu$ joining the points $a'$ and $b'$ in $U$ such that $\varphi(a)=\varphi(a')$ and $\varphi(b)=\varphi(b')$. Hence, there are paths $l_1$ and $l_2$ in $H(\epsilon_H)$ joining $a$ with $a'$ and $b'$ with $b$ respectively. Therefore, the path
$$l_3\,=\,l_1\cdot l'\cdot l_2$$
joins $a$ with $b$ which is absurd since they belong to different connected components. We conclude that $\epsilon$ is connected and this concludes the proof.
\end{proof}

Recall the definition of the \textit{connectivity matrix} $\left(\alpha_{\gamma',\gamma}\right)$ whose entry $\alpha_{\gamma',\gamma}$ equals one if $\gamma\wedge \gamma'$ has a single class and equals zero otherwise.

\begin{prop}\label{Connectivity_matrix}
The connectivity matrix is invertible.
\end{prop}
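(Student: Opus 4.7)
My plan is to exhibit a factorization $M_m = ZDZ^{T}$ of the connectivity matrix from which invertibility is immediate. I view $\Gamma(U)$ as the standard partition lattice ordered by refinement, with the discrete partition $\hat 0$ as minimum and the one-block partition $\hat 1$ as maximum; under this convention the transitive-closure operation used in the paper to form $\gamma\wedge\gamma'$ is the lattice \emph{join}, which I write $\gamma\vee\gamma'$. So the entries of $M_m$ read $\alpha_{\gamma,\gamma'}=[\gamma\vee\gamma'=\hat 1]$.

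Fix a total ordering of $\Gamma(U)$ extending the refinement order (for instance, sort by number of blocks, breaking ties arbitrarily). Let $Z$ be the zeta matrix $Z_{\gamma,\sigma}=[\gamma\le\sigma]$, which is upper unitriangular in this ordering, so $\det Z=1$. Let $D$ be the diagonal matrix with entries $D_{\sigma,\sigma}=\mu(\sigma,\hat 1)$, where $\mu$ is the Möbius function of $\Gamma(U)$. Then
$$(ZDZ^{T})_{\gamma,\gamma'}\,=\,\sum_{\sigma\in\Gamma(U)}[\gamma\le\sigma]\,\mu(\sigma,\hat 1)\,[\gamma'\le\sigma]\,=\,\sum_{\sigma\ge\gamma\vee\gamma'}\mu(\sigma,\hat 1)\,=\,[\gamma\vee\gamma'=\hat 1],$$
the last equality being the standard Möbius identity applied on the interval $[\gamma\vee\gamma',\hat 1]$. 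Therefore $M_m=ZDZ^{T}$, and taking determinants gives $\det M_m=\prod_{\sigma\in\Gamma(U)}\mu(\sigma,\hat 1)$.

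To conclude, I invoke the classical closed form $\mu(\sigma,\hat 1)=(-1)^{b(\sigma)-1}(b(\sigma)-1)!$, where $b(\sigma)$ denotes the number of blocks of $\sigma$; since this quantity is never zero, $\det M_m\ne 0$ and $M_m$ is invertible. The main obstacle I anticipate is recognising the correct factorization: once one observes that $[\gamma\vee\gamma'=\hat 1]$ has the form $\sum_{\sigma\ge\gamma\vee\gamma'} f(\sigma)$, Möbius inversion forces $f=\mu(\cdot,\hat 1)$ and the rest is routine. An alternative route, effectively the one taken by Jackson and by the first author in the references the paper cites, bypasses the Möbius closed form by diagonalizing $M_m$ directly via row operations indexed by the partition lattice; this is combinatorially heavier but does not rely on a black-box input about $\Gamma(U)$.
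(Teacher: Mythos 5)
Your argument is correct, and it takes a genuinely different route from the paper: the paper's ``proof'' of Proposition~\ref{Connectivity_matrix} is an outsourced citation (to the determinant computations of Jackson and of Burgos), whereas you give a short self-contained proof via the Lindstr\"om--Wilf style factorization $M_m=ZDZ^{T}$. Your translation of conventions is accurate --- the paper orders $\Gamma(U)$ with refinement \emph{increasing}, so its infimum $\gamma\wedge\gamma'$ is the standard join $\gamma\vee\gamma'$, and $\alpha_{\gamma,\gamma'}=[\gamma\vee\gamma'=\hat{1}]$ as you write --- and each step checks out: $Z$ is unitriangular in any linear extension, the identity $\sum_{\sigma\ge\gamma\vee\gamma'}\mu(\sigma,\hat{1})=[\gamma\vee\gamma'=\hat{1}]$ is exactly the dual defining recursion of the M\"obius function on the interval $[\gamma\vee\gamma',\hat{1}]$, and $\mu(\sigma,\hat{1})=(-1)^{b(\sigma)-1}(b(\sigma)-1)!\neq 0$ follows from $[\sigma,\hat{1}]\cong\Gamma(b(\sigma))$ together with the classical value of $\mu(\hat{0},\hat{1})$ in the partition lattice (a standard textbook fact, so invoking it is legitimate, though it is the one lattice-specific input your proof needs --- in a general lattice $\mu(\sigma,\hat{1})$ can vanish). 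What your approach buys beyond the paper's: it removes the only non-self-contained step of the paper (which the author explicitly flags), it yields the determinant $\det M_m=\prod_{\sigma\in\Gamma(U)}(-1)^{b(\sigma)-1}(b(\sigma)-1)!$ for free --- consistent with the paper's $m=3$ example, where this product equals $-2$ and indeed $\det(\alpha_{\gamma,\gamma'})=-2$ for the displayed $5\times 5$ matrix --- and it even gives a closed formula for the inverse, $\beta_{\gamma,\gamma'}=\sum_{\sigma}\mu(\gamma,\sigma)\,\mu(\gamma',\sigma)\,\mu(\sigma,\hat{1})^{-1}$, from $M_m^{-1}=(Z^{T})^{-1}D^{-1}Z^{-1}$, which could be used to verify identities such as those in the proof of Corollary~\ref{cor4} directly. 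Your closing remark attributing a row-operation diagonalization to the cited references is a fair, if loose, characterization; it does not affect the proof.
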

\begin{proof}
This is the content of the work by the author in \cite{Burgos2}. It also follows as a corollary of the calculation by D. M. Jackson in \cite{Jackson}. See also (Section 4, \cite{Burgos4}).
\end{proof}

\begin{proof}[Proof of Theorem \ref{main2}]

Given a graph $W$ and a state $\epsilon\in \Lambda_W$, define the orthogonal projection on the subspace generated by $\ket{\epsilon}$,
$$P_{\,\,\ket{\epsilon}}:\,l^2_\C(\,\Lambda_W\,)\rightarrow l^2_\C(\,\Lambda_W\,).$$
Concretely, with respect to the basis $\Lambda_W$, $P_{\,\,\ket{\epsilon}}(\ket{\epsilon})=\ket{\epsilon}$ and $P_{\,\,\ket{\epsilon}}(\ket{\epsilon'})=0$ for any other state $\epsilon'\in \Lambda_W$ distinct from $\epsilon$.

From the proof of Theorem \ref{main1} in section \ref{Section_proof_1}, we can express the quantum reliability operator in term of the projections as
\begin{equation}\label{QR_proj}
\widehat{QR}_{\,W}\,=\,\sum_{\epsilon\in \mathcal{C}(W)}\,P_{\,\,\ket{\epsilon}}.
\end{equation}

Now consider a state $\epsilon\in \Lambda_G$ and its unique decomposition $(\epsilon_K,\epsilon_H)$ such that $\epsilon_K\in \Lambda_K$ and $\epsilon_H\in \Lambda_H$. Then,
\begin{equation}\label{Proj_desc}
P_{\,\,\ket{\epsilon}}\,=\,P_{\,\,\ket{\epsilon_K}\,\otimes\, \ket{\epsilon_H}}\,=\,P_{\,\,\ket{\epsilon_K}}\,\otimes\,P_{\,\,\ket{\epsilon_H}}.
\end{equation}

Given a partition $\gamma$ of $U$, define the orthogonal projection onto the subspace of $\Lambda_H$ generated by the set of those states having no islands whose associated partition is $\gamma$,
$$\hat{\I}_{\gamma}:\,l^2_\C(\,\Lambda_H\,)\rightarrow l^2_\C(\,\Lambda_H\,),\qquad \hat{\I}_{\gamma}\,=\,\sum_{\substack{\epsilon\in \hat{\Lambda}_H \\ \mathcal{P}(\epsilon)=\gamma}}\,P_{\,\,\ket{\epsilon}}.$$

From Lemmas \ref{Lema1}, \ref{Lema2} and identities \eqref{QR_proj}, \eqref{Proj_desc} we immediately have the identities
\begin{equation}\label{Identities}
\widehat{QR}_G\,=\,\sum_{\gamma\in \Gamma(U)}\,\widehat{QR}_{K/\gamma}\,\otimes\,\hat{\I}_\gamma,\qquad \widehat{QR}_{H/\gamma'}\,=\,\sum_{\gamma\in \Gamma(U)}\,\,\alpha_{\gamma',\gamma}\,\hat{\I}_\gamma.
\end{equation}

By Proposition \ref{Connectivity_matrix}, the second identity in \eqref{Identities} can be inverted and substituted in the first giving the desired splitting. This concludes the proof.
\end{proof}

\section{Proof of Corollaries \ref{cor1}, \ref{cor3} and \ref{cor4}}

\begin{proof}[Proof of Corollary \ref{cor1}]
This immediately follows from the following calculation
$$QR_G\left(\,\pi(\ket{\psi_K}\otimes\ket{\psi_H})\,\right)\,=\,\bra{\psi_K}\otimes\bra{\psi_H}\, \widehat{QR}_G\,\ket{\psi_K}\otimes\ket{\psi_H}$$
\begin{eqnarray*}
&=&\,\sum_{\gamma,\,\gamma'\in \Gamma(U)}\ \beta_{\gamma,\gamma'}\ \bra{\psi_K}\otimes\bra{\psi_H}\,\widehat{QR}_{K/\gamma}\otimes \widehat{QR}_{H/\gamma'}\,\ket{\psi_K}\otimes\ket{\psi_H} \\
&=&\,\sum_{\gamma,\,\gamma'\in \Gamma(U)}\ \beta_{\gamma,\gamma'}\ \bra{\psi_K}\,\widehat{QR}_{K/\gamma}\,\ket{\psi_K}\ \bra{\psi_H}\,\widehat{QR}_{H/\gamma'}\,\ket{\psi_H} \\
&=&\,\sum_{\gamma,\,\gamma'\in \Gamma(U)}\ \beta_{\gamma,\gamma'}\ QR_{K/\gamma}\left(\,\pi(\ket{\psi_K})\,\right)\, QR_{H/\gamma'}\left(\,\pi(\ket{\psi_H})\,\right).
\end{eqnarray*}
\end{proof}

\begin{proof}[Proof of Corollary \ref{cor3}]
This immediately follows from the definition of a hybrid network, Corollary \ref{cor1} and the extension axiom in definition \ref{Definition_QR}.
\end{proof}

\begin{proof}[Proof of Corollary \ref{cor4}]
Since $\alpha_{{\bf t},\gamma}=1$ and $\alpha_{{\bf f},\gamma}=\delta_{{\bf t}, \gamma}$ for every partition $\gamma$, it is clear that
$$\beta_{{\bf t},\gamma}\,=\,\delta_{{\bf f},\gamma},\qquad\sum_{\gamma'\in\Gamma(U)}\,\beta_{\gamma,\gamma'}\,=\,\delta_{{\bf t},\gamma},\qquad \gamma\in\Gamma(U),$$
where ${\bf t}$ and ${\bf f}$ denote the trivial and full partitions respectively and $\delta$ denotes the Kronecker delta. See the matrices \eqref{Connectivity_three_nodes} for a concrete example.

Now, the result immediately follows from Corollary \ref{cor3} and the fact that the function $QR_{K/{V_K}}$ is identically one since every state in $\Lambda_K$ is connected in the graph $K/{V_K}$.
\end{proof}


\begin{thebibliography}{coh93}


\bibitem[Bo]{Born}
M. Born, \emph{Zur Quantenmechanik der Stoßvorgänge [On the quantum mechanics of collisions]}, Zeitschrift für Physik, J. A. Wheeler, W. H. Zurek (eds.) {\bf 37} (1926). Princeton University Press (published 1983), 863--867.

\bibitem[BR]{Burgos1}
J. M. Burgos, F. Robledo, \emph{Factorization of network reliability with perfect nodes I: Introduction and statements}, Discrete Applied Mathematics, {\bf 198} 2016, 82--90.

\bibitem[Bu1]{Burgos2}
J. M. Burgos, \emph{Factorization of network reliability with perfect nodes II: Connectivity matrix}, Discrete Applied Mathematics, {\bf 198} (2016), 91--100.

\bibitem[Bu2]{Burgos3}
J. M. Burgos, \emph{Singularities in Negami's splitting formula for the Tutte polynomial}, Discrete Applied Mathematics, {\bf 237} (2018), 65--74.

\bibitem[Bu3]{Burgos4}
J. M. Burgos, \emph{A note on Möbius algebras and applications}, Linear Algebra and its Applications, {\bf 557} (2018), 419--429.

\bibitem[Co]{Colbourn}
C. J. Colbourn, \emph{The Combinatorics of Network Reliability}, New York, Oxford University Press, 1987.

\bibitem[CKW]{CKW}
V. Coffman, J. Kundu, W. K. Wootters, \emph{Distributed entanglement}, Phys. Rev. A, {\bf 61} (2000).


\bibitem[Gl]{Gleason}
A. M. Gleason, \emph{Measures on the closed subspaces of a Hilbert space}, Indiana University Mathematics Journal, {\bf 6} (1957), 885--893.

\bibitem[Ja]{Jackson}
D. M. Jackson, \emph{The lattice of noncrossing partitions and the Birkhoff-Lewis equations}, Europ. J. Combin., {\bf 15} (1994), 245--250.

\bibitem[KS]{noncontext}
S. Kochen, E. P. Specker, \emph{The problem of hidden variables in quantum mechanics}, Journal of Mathematics and Mechanics, {\bf 17} (1967), 59--87.

\bibitem[KW]{Towards}
W. Kozlowski, S. Wehner, \emph{Towards Large-Scale Quantum Networks}, The Sixth Annual ACM International Conference
on Nanoscale Computing and Communication, 2019, 7 pp.

\bibitem[Ne]{Negami}
S. Negami, \emph{Polynomial invariants of graphs}, Trans. Amer. Math. Soc., {\bf 299} (1987), 601--622.

\bibitem[NC]{Nielsen}
M. A. Nielsen, I. L. Chuang, \emph{Quantum information and quantum computation}, Cambridge University Press 2, {\bf 8}, 2000.

\bibitem[OV]{OV}
T. J. Osborne, F. Verstraete, \emph{General Monogamy Inequality for Bipartite Qubit Entanglement}, Physical Review Letters, {\bf 96} (2006).



\bibitem[Te]{monogamy}
B. M. Terhal, \emph{Is entanglement monogamous?}, IBM Journal of Research and Development, {\bf 48} (2004), 71--78.

\bibitem[vN]{vonNeumann}
J. von Neumann, \emph{Mathematical Foundations of Quantum Mechanics}, Princeton University Press, 1955.

\end{thebibliography}
\end{document}